\newtheorem{defn}{\textbf{Definition}}
\newtheorem{theorem}{\textbf{Theorem}}
\newtheorem{lemma}{\textbf{Lemma}}
\newcommand\tab[1][0.8cm]{\hspace*{#1}}
\newcommand\tabs[1][0.3cm]{\hspace*{#1}}
\algnewcommand\algorithmicforeach{\textbf{for each}}
\begin{document}

\ppgccufmg{
  title={Concurrent Self-Adjusting Distributed Tree Networks},
  authorrev={Soares Peres, Bruna},
  cutter={D1234p},
  cdu={519.6*82.10},
  university={Federal University of Minas Gerais},
  course={Computer Science},
  portuguesetitle={Topologia Auto-Ajust\'avel Concorrente e Distribu\'ida para Comunica\c c\~ao Par-a-Par em Redes de Computadores},
  portugueseuniversity={Universidade Federal de Minas Gerais},
  portuguesecourse={Ci\^encia da Computa\c c\~ao},
  address={Belo Horizonte},
  date={2017-05},
  advisor={Olga Goussevskaia},
  abstract=[brazil]{Resumo}{resumo},
  abstract={Abstract}{abstract},
}


\chapter{Introduction}\label{chap:intro}

\section{Motivation}
The popularity of cloud-based networks, virtualized desktops and servers, and remote data-storage devices are increasing and organizations are taking advantage of these technologies. Network virtualization is the key to the success of cloud computing, and SDN is a really helpful technology in network virtualization~\cite{Jain13}. Software defined networking (SDN) has emerged as an efficient technology to support the dynamic of future networks in an efficient way~\cite{Sezer13}. While Cloud systems are usually hosted in large data-centers and are centrally managed, some works have proposed other types of Cloud architectures. Babaoglu et al. ~\cite{Babaoglu12}, for example, presents a fully decentralized P2P Cloud, that  allows organizations or even individual to build a computing infrastructure out of existing resources, which can be easily allocated among different tasks. Other applications, such as multimedia cloud computing, can also benefit from P2P architecture. In multimedia cloud computing, users can store and process in a distributed manner their multimedia application data in the cloud. P2P multimedia computing refers to a distributed application architecture that partitions multimedia-computing tasks or workloads between peers~\cite{Zhu11}.

All of these technologies have something in common: the network topology can be represented by a distributed data structure and networks are optimized toward static metrics, such as the diameter or the length of the longest route. Furthermore, communication between two nodes in the network is performed through requests. The requests flow suffers many changes over time, increasing the use of different data paths at different times. Since the network topology is built in advance, given static metrics, it does not react to the real time data traffic demand.

The design of scalable and robust overlay topologies has been a main research subject since the very origins of peer-to-peer (P2P) computing. In \cite{AvinBS13}, Avint et al. initiate the study of topologies optimized to serve the communication demand. The authors propose a simple overlay topology composed of $k$ (rooted and directed) Binary Search Trees (BSTs), based on splay tree concepts \cite{Sleator:1985:SBS:3828.3835}. In \cite{Avin13} and \cite{Schmid15}, the authors present SplayNet, a self-adjusting distributed tree network that improves the communication cost between two nodes. In SplayNets, nodes communicating more frequently should become topologically closer to each other. Unlike in classical splay trees, where requests always originate from the root (which however can change over time), in SplayNets communication happens between arbitrary node pairs in the network. In contrast to these flexible classic data structures, and with the exception of SplayNets, today's distributed data structures and networks are still optimized toward static metrics, such as the diameter or the longest route~\cite{conf/sigcomm/SchlinkerMSMVYK15}.

\section{Objectives}

SplayNets are a generalization of splay trees in computer networks. Splay Trees are a self-adjusting form of binary search tree, first introduced by Selator and Tarjan \cite{Sleator:1985:SBS:3828.3835}. In such structures, frequently accessed elements are moved closer to the root to improve the average access time given the element's popularity. Unlike in classical splay trees, where requests always originate from the root, in SplayNet communication happens between arbitrary node pairs in the network. Communication between two nodes in the network is performed through requests, and requests flows suffer many changes over time, increasing the use of different data paths at different times.

There are several applications for concurrent SplayNets, such as self-adjusting peer-to-peer overlays, but also in datacenter environments, where emerging technologies, like the ProjecToR interconnect~\cite{projector}, allow top-of-the-rack switches to establish direct links over time. ProjecToR is a novel approach that enables reconfigurable data center interconnection. It uses a digital micromirror device (DMD) and mirror assembly combination as a transmitter and a photodetector on top of the rack as a receiver, instead of electrical packet switches interconnecting racks in a multi-tier topology using optical fiber cables. This allows all racks to establish direct links with a fast reconfiguration, i. e., reconfiguring a link takes only 12 microseconds, a desirable feature for SplayNet.

The analysis presented in \cite{Roy:2015} shows that, unlike reported on in literature the majority of traffic in a data center is not rack-local. Instead, almost 60\% of all traffic from Facebook's data center clusters are inter-rack (intra-cluster). Thus, combining SplayNet with ProjecToR can optimize the communication performance between racks inside a data center.

Our objective is to implement a distributed and concurrent locally self-adjusting tree network, and analyze how concurrent operations fundamentally affect the algorithm's performance. In addition, we want to evaluate the SplayNet performance with different workloads through simulations. Finally, we intend to understand better how SplayNet behaves by bringing together analysis and existing models of real workloads.

\section{Contribution}
While SplayNets are inherently intended to distributed applications, so far, only sequential algorithms are known to maintain SplayNets. In this work, we present the first distributed and concurrent implementation of SplayNets.

We show safety and liveness of our algorithm, and we analyze the work of the algorithm. In particular, we prove that it is loop- and deadlock-free, and we show that the amortized average cost is increased by only a logarithmic factor, when compared to the non-concurrent scenario, analyzed in~\cite{Schmid15}. 

Moving from centralized to distributed and concurrent algorithms is a challenging task, due to additional complexity needed to keep the network in a consistent state. But it also brings benefits, by simplifying costly tasks, like global request scheduling and network expansion, in particular, while the network continues to carry traffic~\cite{conf/sigcomm/SchlinkerMSMVYK15}. 

\section{Completed Tasks}
In this Section we present the completed tasks of the Thesis Project proposed. A brief announcement submission of this works was submitted to the International Symposium on Distributed Computing (DISC 2017).

We first present a design overview of the distributed implementation of SplayNet. We implement concurrency control using (conservative) read-write locks on the (bidirectional) 
links, maintained by each node in the network. To prevent nodes from starving, every node maintains a local buffer with a queue of rotation requests received from its neighborhood. The buffer follows a priority routine that defines a global order in the network.

Analytical results show that our proposed algorithm prevents loops and deadlocks from occurring between concurrent rotations. We show that the duration of a round, i.e., the time between a node requesting and completing a rotation, is $O(logm)$, where $m$ is the number of concurrent splay requests in the network. Finally, we compute the total amortized average cost of a splay request in number of rounds and number of time-slots and as a function of the empirical entropies of source and destination nodes of the splay requests.


\section{Organization}

The remainder of this work is divided as follows. In Chapter \ref{chap:related} we present the related work. In Chapter \ref{chap:model} we introduce briefly the model of SplayNet proposed in \cite{Avin13} and \cite{Schmid15}. In Chapter \ref{chap:design} we present our SplayNet design overview, considering concurrent splay operations. In Chapter \ref{chap:analysis} we analyze the cost of SplayNet in a concurrent scenario. 
Finally, in Chapter \ref{chap:future}, we discuss the future directions of this thesis project.

\chapter{Related Work}\label{chap:related}


The challenge of designing an efficient and robust overlay topology for peer-to-peer computing has been a research subject since the origins of such technology. Initially, the design of peer-to-peer (p2p) topologies considers the optimization of static properties. In \cite{AvinBS13}, Avint et al. initiate the study of topologies optimized to serve the communication demand. The authors propose a simple overlay topology composed of $k$ (rooted and directed) Binary Search Trees (BSTs), based on splay tree concepts\cite{Sleator:1985:SBS:3828.3835}.

In \cite{Avin13} and \cite{Schmid15}, the authors present SplayNet, a self-adjusting distributed tree network that improves the communication cost between two nodes. In SplayNet, nodes communicating more frequently should become topologically closer to each other. In contrast to these flexible classic data structures, today's distributed data structures and networks are still optimized toward static metrics, such as the diameter or the length of the longest route. Furthermore, they analyze the performance of SplayNet (in terms of amortized costs) and show that the overall cost is upper bounded by the empirical entropies of the sources and destinations in the communication pattern. A simple lower bound follows from conditional empirical entropies. They also prove the optimality of SplayNet in specific case studies, e.g., when the communication pattern follows a product distribution.

Self-adjusting networks have many applications, ranging from self-optimizing peer-to-peer topologies over green computing (e.g., due to reduced energy consumption) \cite{Heller10} to adaptive virtual machine migrations \cite{Arora11}, \cite{Shang10}, microprocessor memory architectures \cite{Lis11}, and grids \cite{Batista07}. Other self-adjusting routing scheme were considered, e.g., in scale-free networks to overcome congestion \cite{Tang09}. Peer-to-peer networks are particularly interesting dynamic systems as they are very transient and members continuously join and leave. In this sense, a peer-to-peer system can never be fully repaired but must always be fully functional. Today, peer-to-peer networking is a relatively mature field of research, and there are many solutions to maintain desirable network properties under both randomized \cite{Scheideler2009} as well as worst case \cite{Kuhn2010} membership changes, and some peer-to-peer networks are even self-stabilizing \cite{JRSST-PTADSSG-09} in the sense that they quickly converge to a desirable topology (e.g., a hypercube) from an arbitrary connected structure. However, none of these systems are self-adjusting to the demand.

SplayNet is built upon classic literature on self-adapting data structures, in particular upon the seminal work of Sleator and Tarjan on splay trees \cite{Sleator:1985:SBS:3828.3835}: Splay trees are optimized binary search trees which move more popular items closer to the root in order to reduce the average access time. Splay trees and its variants (e.g., Tango trees \cite{Demaine04} or multi-splay trees \cite{Wang:2006}) have been studied intensively for many years (e.g. \cite{Allen78}, \cite{Sleator:1985:SBS:3828.3835}), and the famous dynamic optimality conjecture continues to puzzle researchers: The conjecture claims that splay trees perform as well as any other binary search tree  algorithm \cite{Demaine04}, \cite{Wang:2006}. In contrast to the classical splay tree data structures, on SplayNet the lookups or requests cannot only originate from a single root, but communication happens between all pairs of nodes in the network. Hence, SplayNet is in the realm of distributed data structures or networking.

The Flattening algorithm, presented in \cite{Reiter08}, is a distributed algorithm that improves the performance of accessing one node from another node in a k-ary tree by adjusting the tree as nodes communicate with each other. In contrast to SplayNet, Flattening does not require preserving the order of nodes in the tree. However, this ordering is an attractive characteristic since it allows locally routing: given a destination identifier (or address), each node can decide locally whether to forward the packet to its left child, its right child, or its parent (in a binary tree, for example). The total actual work done by a sequence of $m$ Flattening operations is at most $3m + (2m+n)\log n$.

CB Tree, presented in \cite{Afek2014} is a concurrent self-adjusting binary search tree inspired by splay trees. As in splay trees, more-frequently accessed items move closer to the root. Instead of using the classical rotations (zig, zig-zig and zig-zag), they divide the splay into single and double rotations. In addition, the authors propose a counting-based method, in which each operation in a CB tree is allowed to do single and double rotations at nodes on the access path only when such a single or double rotation reduces the tree potential by at least $\delta$, where $\delta$ is a positive constant $< 2$. However, the amortized bounds on the path length and number of rotations are calculated only for the sequential algorithm, and a more detailed analysis of how concurrent operations fundamentally affect the algorithm's performance is not presented.
\chapter{Model and Background}\label{chap:model}
    \begin{defn}\label{def:splaynet}\textbf{Network model:} : A SplayNet $\mathcal{T}$ is comprised of a set of $n$ communication nodes with distinct identifiers, interacting according to a certain communication pattern $\mathcal{F}$. Differently from \cite{Schmid15}, where the communication pattern was modeled as a sequence of consecutive communication requests, we consider a set of $m$ concurrent communication requests. The goal is to dynamically find a locally routable binary search tree (BST) topology, which connects all nodes and optimizes the routing cost for $\mathcal{F}$, making local topology transformations, called rotations, before each request is served.
    \end{defn}
    
    \begin{defn}\label{def:model} \textbf{Communication model}: nodes communicate via reliable and synchronous message passing: nodes do not fail and the execution is partitioned into time-slots;
    \end{defn}
    
    \begin{defn} \textbf{Lowest common ancestor $\alpha(a,b)$}: the lowest common ancestor of two nodes $a$ and $b$ ($\alpha(a,b)$) in $\mathcal{T}$ is the lowest node that has both $a$ and $b$ as descendants. A node can be the lowest common ancestor of itself and another node.
    \end{defn}
    
    \begin{defn} \textbf{Set of concurrent splay requests $\mathcal{F}$}: Given $n$ nodes distributes in a binary tree $\mathcal{T}$, splay requests can occur concurrently in $\mathcal{T}$. We call $\mathcal{F}_{\mathcal{R}}$ the set of concurrent splay requests in super-round $\mathcal{R}$. In addition, $\mu_{\mathcal{R}}$ and $\delta_{\mathcal{R}}$ represent, respectively, the set of source and destination nodes of a splay in super-round $\mathcal{R}$. An element $f \in \mathcal{F}_{\mathcal{R}}$, is a splay request of an oriented pair of source and destination $(s,d)$, $\mathcal{S}(s,d)$, $\mid s \in \mu_{\mathcal{R}}$ and $d \in \delta_{\mathcal{R}}$. When clear from the context, we will often omit the super-round $\mathcal{R}$ and simply write $\mathcal{F}$.
    \end{defn}
    
    \begin{defn}\label{def:rotation}\textbf{Rotation $\beta(u)$}: Rotations are local transformations of tree network, performed atomically. A rotation to move a node $u$ up depends upon the relative positions of $u$, its parent $v$ and its grandparent $w$. Therefore, there are three different rotation types:

    \begin{itemize}
        \item \textbf{zig:} $u$'s grandparent does not participate in a zig rotation ($u$ may not have a grandparent), see Figure \ref{fig:zig}. In this case, we rotate $u$ over $v$, making $u$'s children be node $v$ and a previous subtree $T_1$, keeping the subtrees intact. Algorithm \ref{alg:zig} presents the sequential centralized algorithm for a zig rotation.
        
        \item \textbf{zig-zig:} $u$ and its parent $v$ are both left or right children, (see figure \ref{fig:zigzig}). $w$ is replaced by $u$, $v$ becomes a child of $u$ and $w$ a child of $v$, keeping the subtrees intact. Algorithm \ref{alg:zigzig} presents the sequential centralized algorithm for a zig-zig rotation.
        
        \item \textbf{zig-zag:} $u \oplus v$ is a left child and the other one is a right child, see Figure \ref{fig:zigzag}. The rotation replaces $w$ by $u$ and $u's$ children become $v$ and $w$, keeping the subtrees intact. Algorithm \ref{alg:zigzag} presents the sequential centralized algorithm for a zig-zag rotation.
    \end{itemize}

\begin{figure}
	\centering
	\begin{subfigure}{0.4\textwidth} 
		\includegraphics[width=\textwidth]{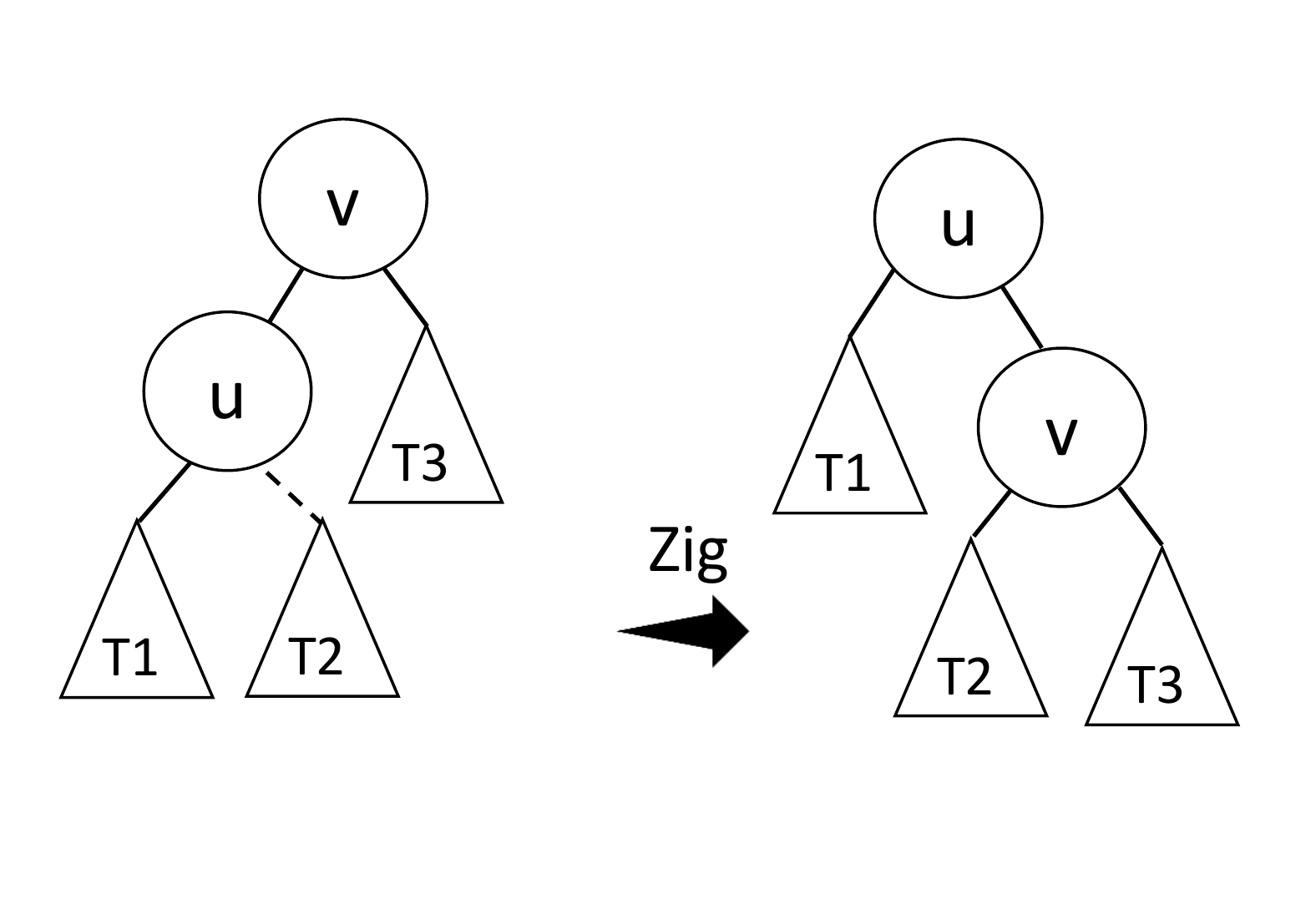}
		\caption{}
		\label{fig:zig}
	\end{subfigure}
	\vspace{1em} 
	\begin{subfigure}{0.4\textwidth} 
		\includegraphics[width=\textwidth]{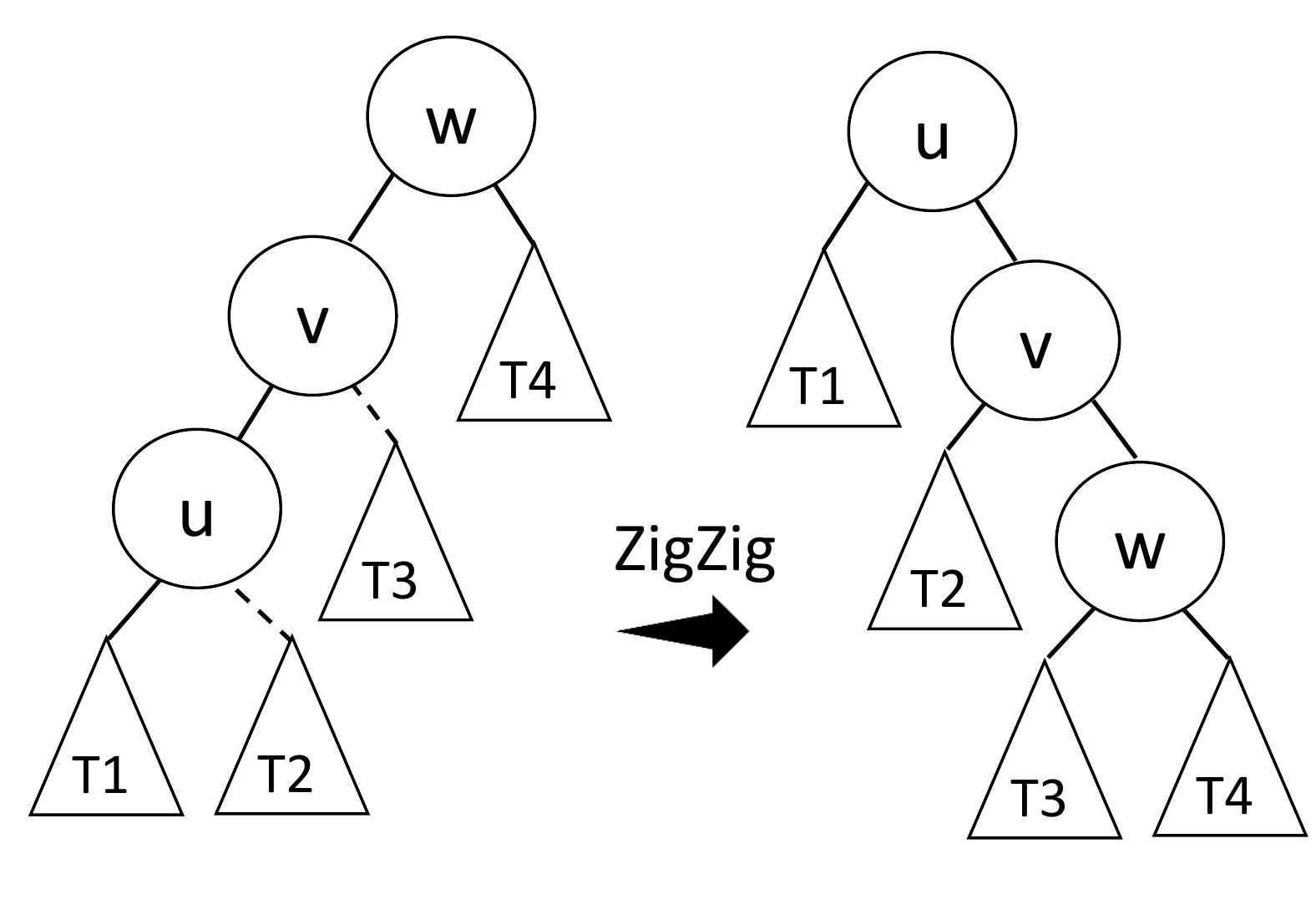}
		\caption{}
		\label{fig:zigzig}
	\end{subfigure}	
	\begin{subfigure}{0.4\textwidth} 
		\includegraphics[width=\textwidth]{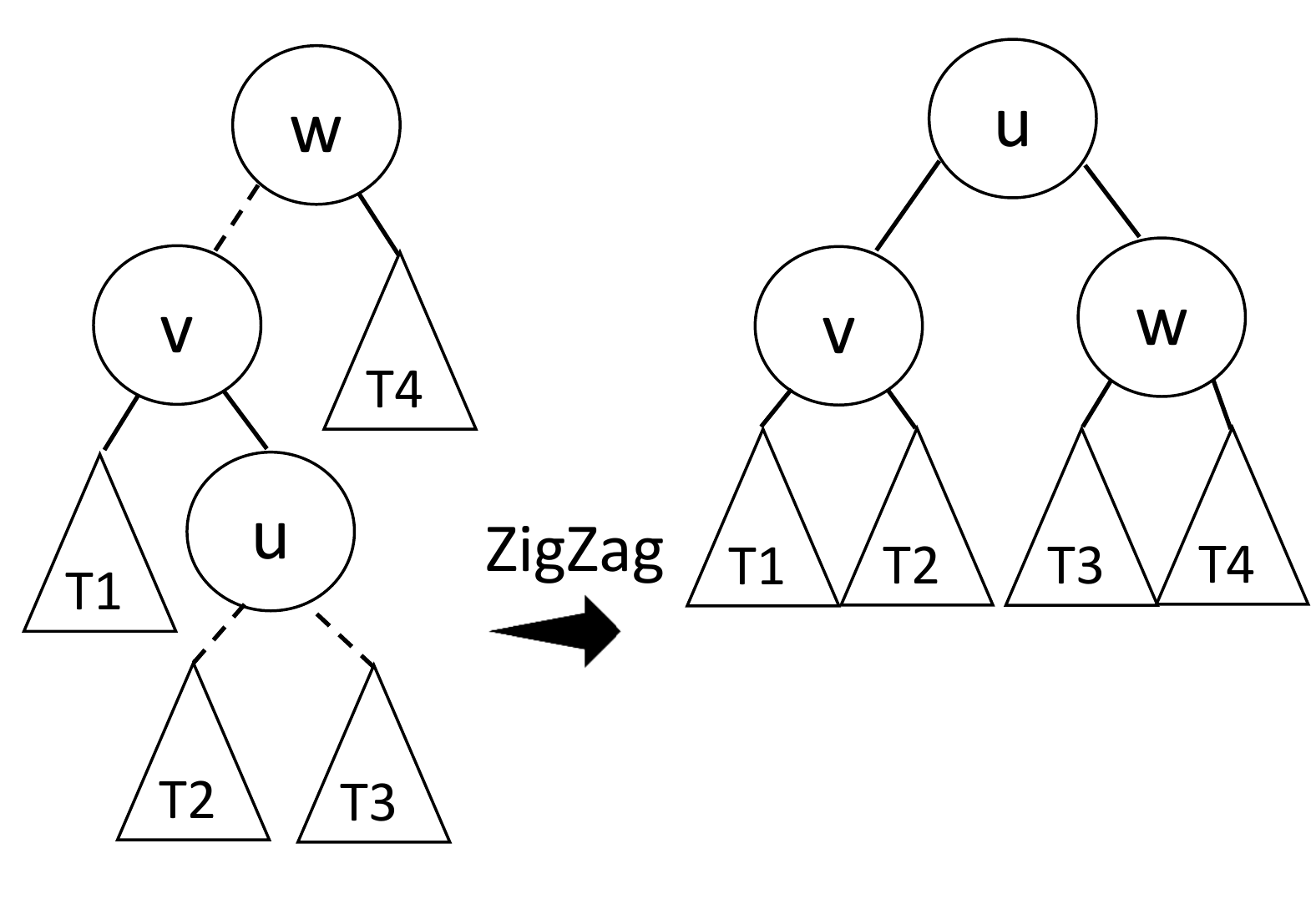}
		\caption{}
		\label{fig:zigzag}
	\end{subfigure}	
	
    \caption{Rotations: (a) Zig, (b) ZigZig and (c) ZigZag. The dashed lines indicate adjacency relationships which are not maintained during the operation}\label{fig:rotations}
\end{figure}

    \end{defn}

    \begin{algorithm}[!ht]
    \caption{Zig Algorithm}\label{alg:zig}
    \begin{algorithmic}[1]
    \Function{$\beta$}{${u}$}
        \If {u > v}
            \State v.r := u.l
            \State u.l.p := v
            \State u.l := v 
            \State v'' := v.r''
            \State u' := v'
        \Else
            \State v.l := u.r
            \State u.r.p := v
            \State u.r := v
            \State v' := v.l'
            \State u'' := v''
        \EndIf
        \If {v > v.p}
            \State (v.p).r := u
        \Else
            \State (v.p).l := u
        \EndIf
        \State u.p := v.p
        \State v.p := u
    \EndFunction
    \end{algorithmic}
    \end{algorithm}
    
\begin{figure*}[ttt!]
 \begin{minipage}[t]{3.15in}
 \begin{algorithm}[H]
    \caption{Zig-Zig Algorithm}\label{alg:zigzig}
    \begin{algorithmic}[1]
    \Function {$\beta$}{${u}$}
        \If {u > w}
            \State w.r := v.l
            \State v.l.p := w
            \State v.l := w
            \State v.r := u.l
            \State u.l.p := v
            \State u.l := v
            \State w'' := w.r''
            \State v' := w'
            \State v'' := v.r''
            \State u' := w'
        \Else
            \State w.l := v.r
            \State v.r.p := w
            \State v.r := w
            \State v.l := u.r
            \State u.r.p := v
            \State u.r := v
            \State w' := w.l'
            \State v' := v.l'
            \State v'' := w''
            \State u'' := w''
        \EndIf
        \If {w > w.p}
            \State (w.p).r := u
        \Else
            \State (w.p).l := u
        \EndIf
        \State u.p := w.p
        \State v.p := u
        \State w.p := v
    \EndFunction
    \end{algorithmic}
    \end{algorithm}
 \end{minipage}
 \hfill
 \begin{minipage}[t]{3.15in}
\begin{algorithm}[H]
    \caption{Zig-Zag Algorithm}\label{alg:zigzag}
    \begin{algorithmic}[1]
    \Function {$\beta$}{${u}$}
        \If {u > w}
            \State w.r := u.l
            \State u.l.p := w
            \State v.l := u.r
            \State u.r.p := v
            \State u.r := v
            \State u.l := w
            \State w'' := w.r''
            \State v' := v.l'
            \State u' := w'
            \State u'' := v''
        \Else
            \State w.l := u.r
            \State u.r.p := w
            \State v.r := u.l
            \State u.l.p := v
            \State u.r := w
            \State u.l := v
            \State w' := w.l'
            \State v'' := v.r''
            \State u' := v'
            \State u'' := w''
        \EndIf
        \If {w > w.p}
            \State (w.p).r := u
        \Else
            \State (w.p).l := u
        \EndIf
        \State u.p := w.p
        \State v.p := u
        \State w.p := u
    \EndFunction
    \end{algorithmic}
    \end{algorithm}
 \end{minipage}
 \hfill
\end{figure*}

    \begin{defn} \textbf{Splay $\mathcal{S}(s,d)$}: A splay with source node $s\in\mathcal{T}$ and destination node $d\in \mathcal{T}$ is an ordered set of rotations performed in parallel by $s$ and $d$, such that there are two time instances $\tau_s$ and $\tau_d$, such that, $s = \alpha(s,d)$ in $\tau_s$ and $d(s,d)=1$ in $\tau_d$ or $d = \alpha(s,d)$ in $\tau_d$ and $d(s,d)=1$ in $\tau_s$.
    \end{defn}
 
    \begin{defn} \textbf{Path set $\mathcal{P}_{\tau}(s,d)$}: path between communication nodes $s$ and $d$, $s \in \mu, d \in \delta$ in $T_{\tau}$. $|\mathcal{P}_{\tau}| = d_{\tau}(s,\alpha(s,d)) + d_{\tau}(d,\alpha(s,d))$, where $d_{\tau}(a,b)$ is the distance between two nodes $a$ and $b$ in $\mathcal{T}$ at time-slot $\tau$.
    \end{defn}
    
    \begin{defn}\label{def:OF} \textbf{Splay Objective}: We say that a splay request $\mathcal{S}(s,d)$ has achieved its objective when $d(s,d)=1$, which means that $s = d.p$ or $d = s.p$.    
    \end{defn}
    
    \begin{defn}\label{def:buffer} \textbf{Buffer $\mathcal{B}$}: a queue of size $|\mathcal{B}|$, that stores rotation requests from a node's children and grandchildren, and its own. Each node maintains a buffer, which must be updated when a rotation is performed by one of the node's neighbors.
    \end{defn}
    
    \begin{defn} \textbf{Carried child}\label{def:cchild}: Given a rotation $\beta(u)$, if a node $x$ is a child of $u$ before and after $\beta(u)$, then $x$ is a \textit{carried child} by that rotation. For example, in Figure \ref{fig:zig}, the root of sub-tree $T_1$, say $t_1$, is a carried child by $\beta(u)$.
    \end{defn}
    
    \begin{defn} \textbf{Abandoned child}\label{def:achild}: Given a rotation $\beta(u)$, if a node $x$ is a child of $u$ before \textit{but not} after $\beta(u)$, then $x$ is an \textit{abandoned child} by that rotation. For example, in Figure \ref{fig:zig}, the root of sub-tree $T_2$, say $t_2$, is an abandoned child by $\beta(u)$.
    \end{defn}    
    
    \textbf{Synchronization:} While we prove our algorithms correct (in terms of safety and liveness) for an arbitrary asynchronous model (without failures), for the complexity analysis, we we consider a simplified synchronous model, where time is partitioned into synchronous time-slots, and every node can reliably exchange a constant number of messages with its neighbors. Moreover, each node keeps track of the number of splay requests and the number of rotations it has performed. It considers itself to be in round $t$ and super-round $\mathcal{R}$ after completing a rotation with sequence number $t$ within a splay request with sequence number $\mathcal{R}$.

    \begin{defn} \textbf{Time-slot $\tau$}: a (globally synchronized) period of time, in which every node can send a message to each neighbor, the messages are delivered, and every node performs a local computation;
    \end{defn}
    
    \begin{defn}\label{def:round} \textbf{Round $t$}: a period of time between a node requesting a rotation and completing it. The length of a round is the number of time-slots it takes a node to perform one rotation (we provide an upper bound on the length of a round in Lemma~\ref{claim:round}). Note that, in a given time-slot $\tau$, different nodes can be in different rounds.
    \end{defn}
    
    \begin{defn} \textbf{Super-round $\mathcal{R}$}: a period of time in which each node can request and complete one splay operation. The length of a super-round $|\mathcal{R}|$ is the number of rounds it takes for every source-destination pair in $\mathcal{T}$ to complete one splay request (we provide an upper bound on the length of a super-round in Theorem~\ref{thm:totalcost}).
    \end{defn}

Differently from \cite{Schmid15}, where the communication pattern was modeled as a sequence of consecutive communication requests, we consider a set of concurrent communication requests. Note that, in our model, each communication request starts a splay request.

\begin{defn} \textbf{Set of concurrent splays requests $\mathcal{F}$}: Given a SplayNet $\mathcal{T}$ on $n$ nodes and a super-round $\mathcal{R}$, $\mathcal{F}_{\mathcal{R}}$, $|\mathcal{F}_{\mathcal{R}}|=m$, denotes the set of $m$ concurrent splay requests in $\mathcal{R}$. In addition, let $\mathcal{S}_{\mathcal{R}} \subseteq \mathcal{T}$ and $\mathcal{D}_{\mathcal{R}}\subseteq \mathcal{T}$ represent, respectively, the sets of sources and destinations in $\mathcal{F}_{\mathcal{R}}$, i.e., $\mathcal{S}(s,d) \in \mathcal{F}_{\mathcal{R}}$, $s\in \mathcal{S}_{\mathcal{F}}$, $d\in \mathcal{D}_{\mathcal{F}}$. When clear from the context, we will omit the super-round $\mathcal{R}$ and simply write $\mathcal{F}$.
\end{defn}

\textbf{Algorithm:} The SplayNet algorithm presented in \cite{Schmid15} is a natural generalization of the classic splay tree algorithm. It is based on a double splay strategy: similarly to classic splay trees, SplayNet aggressively moves communicating nodes together; however, rather than splaying nodes to the root of the BST, locality is preserved in the sense that the source and the destination node are only rotated to their common ancestor.

Concretely, consider a communication request $(s,d)$ from node $s$ to node $d$ in time-slot $\tau_i$, and let $\alpha_i(s, d)$ denote the lowest common ancestor of $s$ and $d$ in the current network $\mathcal{T}_i$. Unlike the algorithm proposed in \cite{Schmid15}, in which first $s$ splays towards to the lowest common ancestor, and then $d$ splays until $s = d.p$, we propose a different approach, in which $s$ and $d$ rotate in parallel until $d(s,d)=1$. Thus, when a request $(s,d)$ occurs, $s$ and $d$ splay towards to the lowest common ancestor $\alpha_{i}(s, d)$, using the classic splay operations zig, zig-zig and zig-zag (Figure \ref{fig:rotations}).
Once $s=\alpha_{j}(s,d)$ or $d=\alpha_{j}(s,d)$ in a time-slot $t_{j}\geq t_{i}$, the lowest common ancestor ($s$ or $d$) waits until the other node in the splay becomes its child. If this condition of being the lowest common ancestor changes, due to another rotation, then, the node resumes its rotations.


\noindent\textbf{Entropy:} A useful parameter when evaluating the performance of SplayNets is the \textit{empirical entropy} of the communication request set $\mathcal{F}$ , i.e., the entropy implied by the communication frequencies.
\begin{defn}\label{def:entropy}
Let $\hat{X}(\mathcal{F})=\{f(x_1),\ldots, f(x_n)\}, x_i \in \mathcal{S}_{\mathcal{F}}$ (or, analogously, $\hat{Y}$) be the frequency distribution of the communication sources (or destinations) in the communication sequence $\mathcal{F}$, i.e., $f(x_i) =(\#x_i (\text{ or } y_i)$ is a source (or destination) in $\mathcal{F})/m$. The empirical entropy is defined as follows, for communication sources (and destinations):

\begin{equation}
H(\hat{X})=\sum_{i=1}^n{f(x_i)\log_2{\frac{1}{f(x_i)}}}
\end{equation}
\begin{equation}
H(\hat{Y})=\sum_{i=1}^n{f(y_i)\log_2{\frac{1}{f(y_i)}}}
\end{equation}
\end{defn}

In \cite{Schmid15}, it was shown that the amortized average cost per splay request is $O(H(\hat{X}) + H(\hat{Y}))$. In \cite{Avin13}, the authors assume that $\mathcal{F}$ is a sequence of consecutive splay requests and compute an upper bound on the amortized communication cost as a function of the entropy of the sources and destinations of the requests:
\begin{theorem}\label{thm:entropies}\cite{Avin13} Let $\mathcal{F}$ be an arbitrary sequence of (non-concurrent) communication requests, then for any initial tree $\mathcal{T}_{0}$,
\begin{equation}
Cost(SplayNet,\mathcal{T}_{0},\mathcal{F}) = O(H(\hat{X}) + H(\hat{Y})).\nonumber
\end{equation}
\end{theorem}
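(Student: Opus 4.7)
The plan is to adapt the classical Sleator--Tarjan potential function argument for splay trees to the double-splay setting that governs SplayNet. Let $m_s(v)$ and $m_d(v)$ be the number of requests in $\mathcal{F}$ in which $v$ appears as source and as destination, respectively, and assign weights $w(v) = m_s(v) + m_d(v) + 1$. Define node size $s(v) = \sum_{u \in T_v} w(u)$, rank $r(v) = \log_2 s(v)$, and global potential $\Phi(\mathcal{T}) = \sum_{v \in \mathcal{T}} r(v)$. The total actual cost of serving $\mathcal{F}$ equals the sum of amortized costs plus $\Phi(\mathcal{T}_0) - \Phi(\mathcal{T}_{\text{final}})$, and this boundary term is $O(n \log W)$ with $W = \sum_v w(v) = \Theta(m+n)$, which is negligible relative to the target bound.

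First I would prove a SplayNet-flavored access lemma: a single splay that lifts a node $u$ from depth $d$ up to a prescribed ancestor $z$ has amortized cost $O(1 + r_{\text{after}}(z) - r_{\text{before}}(u))$, hence $O(\log(s(z)/w(u)))$. The case analysis is exactly the classical one on zig, zig-zig and zig-zag steps; the only subtlety is that the rotations take place inside the subtree $T_z$ rooted at $z$ (since $z$ is not disturbed), so the rank sum outside $T_z$ is unchanged and the telescoping of ranks along the splayed path still yields the standard bound.

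Next I would apply this twice per request. For a request $(s,d)$ with LCA $\alpha = \alpha(s,d)$, the algorithm first splays $s$ up to $\alpha$ and then splays $d$ up to become a child of $s$. The first phase has amortized cost $O(\log(s(\alpha)/w(s))) \le O(\log(W/w(s)))$, and the second phase, which happens inside the new subtree containing $d$, costs $O(\log(W/w(d)))$. Summing over all requests in $\mathcal{F}$ gives a total amortized cost of
\begin{equation}
O\Bigl(\sum_{(s,d)\in\mathcal{F}} \log\tfrac{W}{w(s)} + \log\tfrac{W}{w(d)}\Bigr)
  = O\Bigl(\sum_{v} m_s(v)\log\tfrac{m}{m_s(v)} + \sum_v m_d(v)\log\tfrac{m}{m_d(v)}\Bigr),
\end{equation}
which, after dividing by $m$ and recognizing $f(x_v) = m_s(v)/m$ and $f(y_v) = m_d(v)/m$, is precisely $O\bigl(m(H(\hat{X}) + H(\hat{Y}))\bigr)$.

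The main obstacle, and the one step I would take the most care with, is justifying that the ``splay to a fixed ancestor $\alpha$'' operation really does obey the standard access lemma in its restricted form. One has to be explicit that $\alpha$ itself is not rotated, that its rank can only grow when it gains descendants (and grows by at most an additive $O(1)$ per splay phase since the multiset of descendants of $\alpha$ is preserved), and that the telescoping argument along the $s$-to-$\alpha$ path does not ``leak'' rank credit into the rest of the tree. Once this is nailed down, the global sum and the conversion into empirical entropies via the identity $\sum_v f(v) \log(1/f(v)) = H$ is routine.
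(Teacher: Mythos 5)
This statement is imported from \cite{Avin13} and the paper gives no proof of its own, so there is nothing to compare against directly; your reconstruction via the Sleator--Tarjan potential with frequency-based weights, a restricted access lemma for splaying to the LCA's subtree, and a two-phase (source, then destination) application is exactly the argument of the cited source, and it is also the same machinery the paper itself reuses in its concurrent analysis (Lemma~\ref{lemma:variationperround} through Theorem~\ref{thm:finalEntropies}, which introduce the analogous weights $sw$, $dw$ and ranks $sr$, $dr$). The one point to tidy is the weight normalization: with $w(v)=m_s(v)+m_d(v)+1$ the total weight is $W=2m+n$, which leaves a spurious additive $m\log(1+n/m)$ term when $n\gg m$; giving non-participating nodes weight $\varepsilon$ instead of $1$ (or absorbing the boundary potential $\Phi(\mathcal{T}_0)-\Phi(\mathcal{T}_{\mathrm{final}})$) removes it and yields the clean $O(H(\hat{X})+H(\hat{Y}))$ amortized bound.
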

\chapter{Design Overview}\label{chap:design}

In this chapter, we present an overview of the distributed implementation of SplayNet. We cover issues, such as concurrency locks (Section \ref{subsec:lock}), request priorities (Section \ref{subsec:buffer}) and distributed routines (Section \ref{subsec:distributed}).

\section{Concurrency locks}\label{subsec:lock}


In order to allow concurrent rotations, while maintaining a consistent SplayNet topology, some local variables must be synchronized. Unlike proposed in \cite{Reiter08}, in which nodes implement mutually exclusive access to a shared token, we want nodes to perform rotations concurrently, without breaking consistency. 
Consider a rotation $\beta_{\tau}(u)$ and nodes $u, v, w$ and $z \mid v = u.p$, $w= v.p$ and $z = w.p$. In order to guarantee consistency, the following nodes need to be synchronized, or locked, during the rotation, according to rotation's type (see Figure \ref{fig:rotations}:

\begin{enumerate}
\item $\beta_{\tau}(u)$ is a zig: $w$, $v$ and $u$. 

\item $\beta_{\tau}(u)$ is a zig-zig or zig-zag: $z$, $w$, $v$ and $u$.
\end{enumerate}

Even though the link between nodes $u$ and $u.r$ changes because of a rotation, there is no need to lock $u.r$. Since $u$ is locked, another concurrent rotation $\beta(x)$ cannot change the link between $u$ and $u.r$ concurrently with $\beta(u)$. Therefore, if $u.r$ is locked for a concurrent rotation $\beta(x)$, $u.r$ can only be the grand-grandparent in that rotation (or the grandparent if $\beta(x)$ is a zig). The reason of this is that the only link of $u.r$ that $\beta(u)$ changes is $u.r.p$, and the only link changed in the grand-grandparent of a rotation is one of the links to its children ($u.r.r$ or $u.r.l$). This argument applies to $u.l$ and $v.r \lor v.l \neq u$.

Figure \ref{fig:lockingexample} shows an example of local lock for a rotation $\beta(u)$. Nodes in gray ($u$,$v$,$w$) are those \textit{participating} in the rotation: $u$ is the level-one node, $v$ is the level-two node and $w$ is the level-three node. The dotted node ($z$) is just \textit{locked} due to a link change, but does not effectively participate in the rotation. All the nodes that have a link change due to a rotation but do not \textit{participate} in the rotation (in this example $z$, $t3$ and $t2$) are notified of the link changes by the nodes participating in the rotation right after it has occurred.

 \begin{figure}[H]
    \centering
    \includegraphics[width=0.4\linewidth]{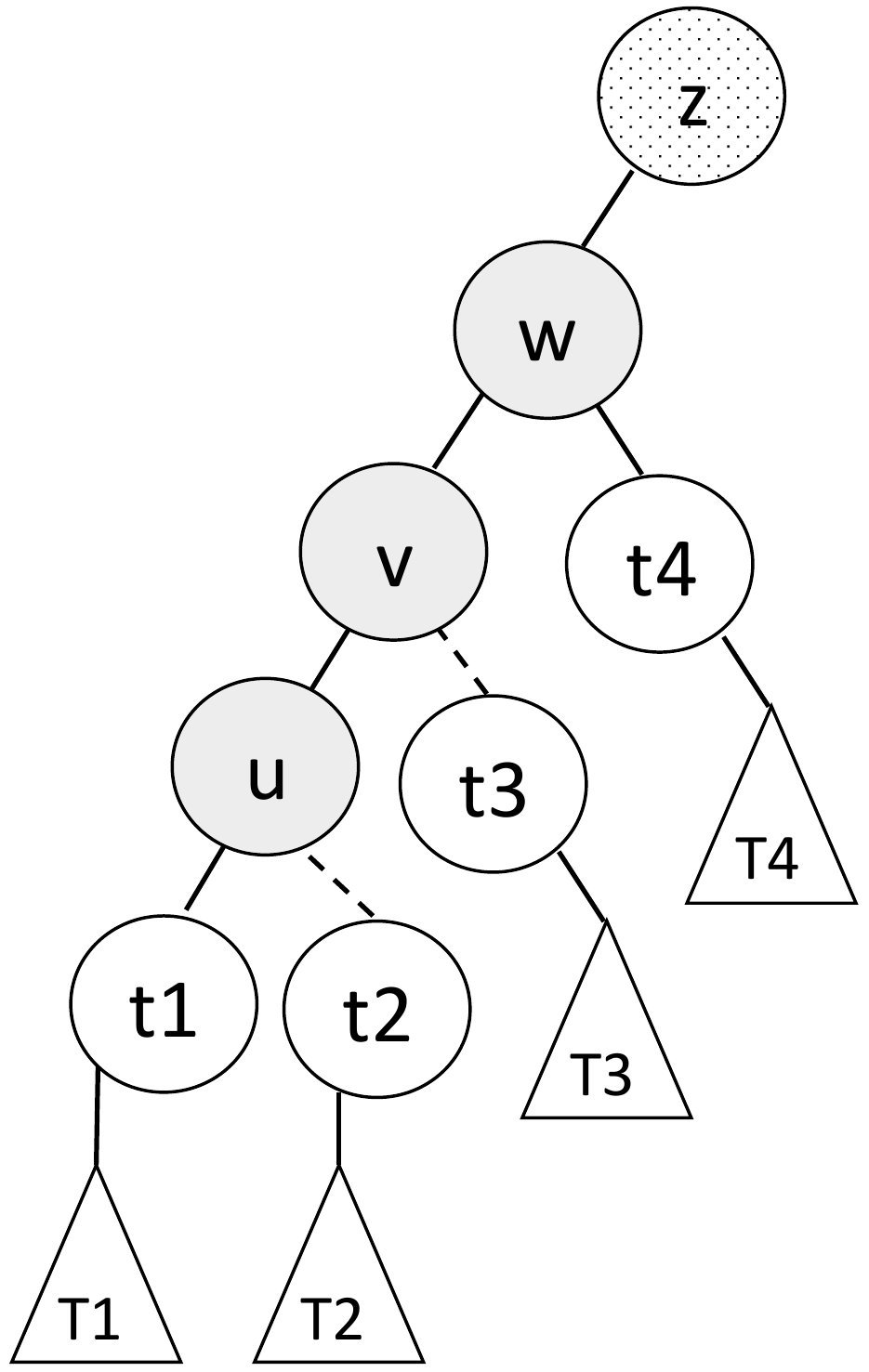}
    \caption{Local locks for rotation $\beta(u)$ (nodes in gray are locked and \textit{participate} in $\beta(u)$; dotted nodes are just \textit{locked}, i.e., have a link update).}
    \label{fig:lockingexample}
\end{figure}


\section{Request priorities}
\label{subsec:buffer}
To avoid starvation, loops and deadlocks, each node maintains a local buffer $\mathcal{B}$, containing a queue of rotation requests received from its neighborhood. Given the concurrency locks, described in the previous section, it is easy to see that a node can participate in a rotation originated by: itself, its right child, its left child, one of its four grandchildren or eight grand-grandchildren. Thus, $|\mathcal{B}| = 15$. In addition, every time a nodes $u$ rotates, the buffer of the nodes whose local variables (links) were changed due to this rotation must be updated, i.e., buffers contents must be exchanges among nodes with link changes.

Each buffer entry is an array consisting of: 
\begin{enumerate}
 \item Super-round and round sequence numbers;
 \item ID of the node that requested the rotation (node \textbf{level-one}); 
 \item ID of the node through which the request arrived (node \textbf{level-two}); 
 \item ID of the farthest node participating in the rotation (node \textbf{level-three}); 
 \item ID of the destination node (or source node) of the splay.
\end{enumerate}

\begin{defn}\label{def:priority}
\textbf{Priority routine}: Consider a node $u$ and its buffer $\mathcal{B}_u$ and the set of rotation requests received by $u$ before some time-slot $\tau \in \mathcal{R}$. The possible request sources are: $u$ (itself), $2 \times c(u)$ (two children), $4 \times g(u)$ (four grandchildren), and $8 \times gg(u)$ (eight grandchildren). Each request $\beta(s,t_s)$, where $s$ is the source and $t$ is the round sequence number (note that each rotation request may belong to a different round), is acknowledged by $u$ using the following rule, to which we refer as the priority routine:
\begin{enumerate}
\item \textit{Non-decreasing super-round sequence number};
\item \textit{Non-decreasing round sequence number};
\item \textit{Hierarchical priority}: within the same round and super-round, priority is given in the following order: $u$ (itself), $c(u)$ (children), $g(u)$ (grandchildren), and $gg(u)$ (grandchildren);  
\item \textit{Ascending ID}: within the same round and hierarchy level, priority is given to nodes with the smallest ID.
\end{enumerate}
\end{defn}

The priority routine is used to avoid starvation, loops, and deadlocks (See Lemmas \ref{claim:noLoops}, \ref{claim:noDeadlocks} and \ref{claim:round}). Figure \ref{fig:bufferexample} shows an example of buffer contents, sorted using the priority routine, when several nodes request a rotation. Nodes in gray are the ones waiting to rotate, and all requests belong to the same round and super-round, so their sequence numbers are omitted.

\begin{figure}[H]
    \centering
    \includegraphics[width=0.6\linewidth]{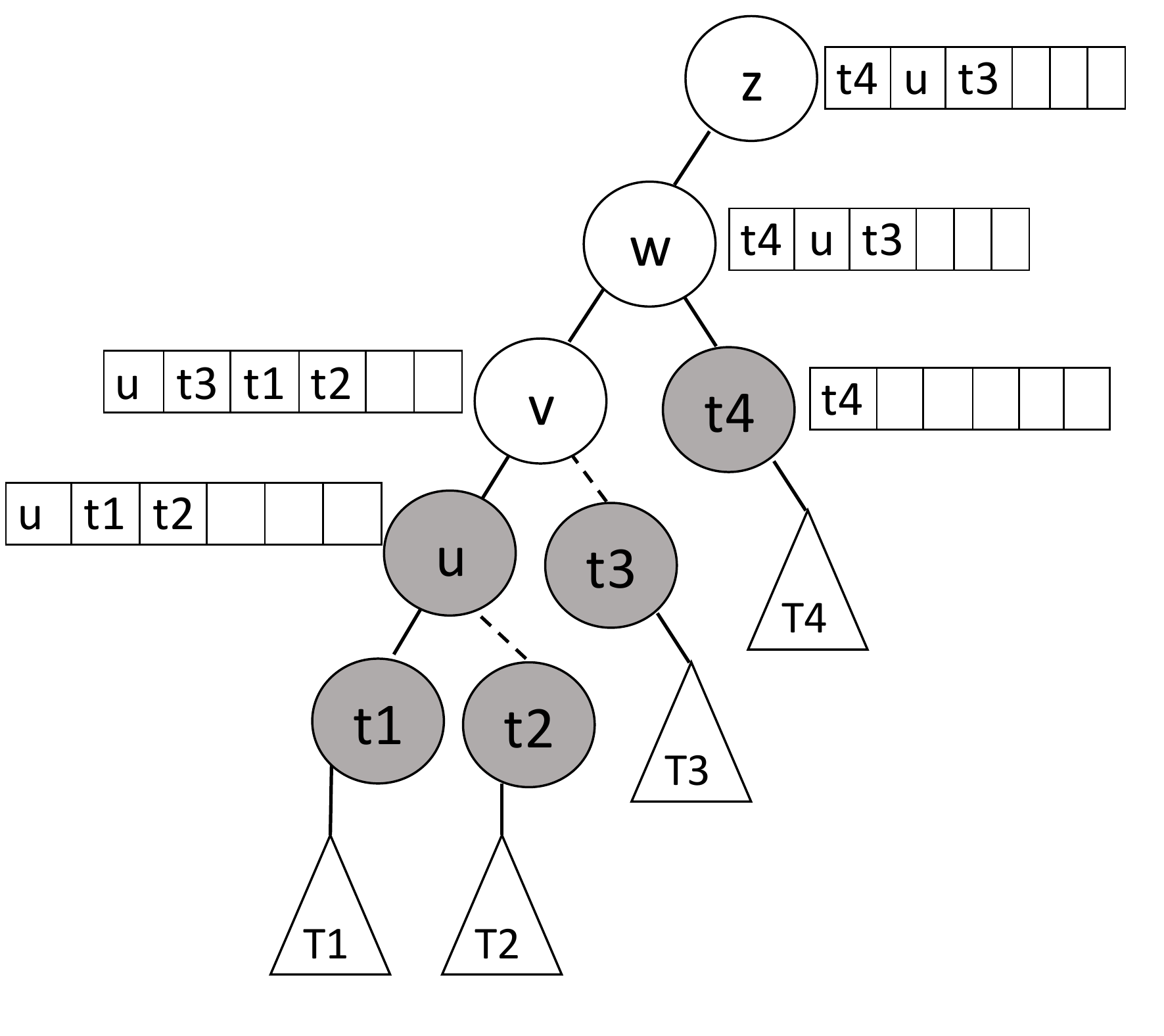}
    \caption{ Local buffers sorted by the \textit{priority routine} (nodes in gray requested a rotation).}
    \label{fig:bufferexample}
\end{figure}

The priority routine is used to avoid starvation, loops, and deadlocks (See Lemmas \ref{claim:noLoops}, \ref{claim:noDeadlocks} and \ref{claim:round}). Figure \ref{fig:bufferexample} shows an example of buffer contents, sorted using the priority routine, when several nodes request a rotation. Nodes in gray are the ones waiting to rotate, and all requests belong to the same round and super-round, so their sequence numbers are omitted.

Consider a time-slot $\tau$ and a node $u$. The buffer of $u$ is \textit{updated} when any of the following events occur: (1) Upon receiving or generating a splay request; (2) Upon receiving a rotation request or a lock request; (3) After generating a rotation request: when $u$ wants to rotate, it must place its own request in the buffer; 
(4) After completing a rotation $\beta(u)$, it is removed from all buffers; (5) Upon changing a local variable (link): If there is a new parent, any entry in which the new parent is level-one must be added to the buffer. If there is an entry from the old parent, it must be removed. If there is a new child, the new child must have forwarded to the new parent (for example, in the rotation ack message) the list of requests in its buffer for which it is level-one, level-two or level-three. Any entry containing the previous child must be removed and the requests from the new child's buffer must be inserted.

\section{Distributed routines}
\label{subsec:distributed}

Consider a splay request $\mathcal{S}(s,d)\in \mathcal{F}$.
Algorithm \ref{alg:splay} presents the distributed routines for a source node $s$ to perform a splay with the destination node $d$. Essentially, $s$ sends a splay request to $d$ and both nodes keep requesting a rotation to its parent, until $d(s,d)=1$.

Algorithms \ref{alg:rotation-u}, \ref{alg:rotation-v}, \ref{alg:rotation-w}, \ref{alg:rotation-wp} and\ref{alg:rotation-x} present the communicating routines for a rotation request. Consider a rotation $\beta(u)$, generated at time-slot $\tau_1$, without any other concurrent rotations in the network. The following sequence of steps takes place (consider notation from zig-zig example in Figure~\ref{fig:rotations}, with $z=w.p$, $t_i=$ root of sub-tree $T_i$, $\mathcal{B}_x=$ buffer of node $x$): 
\begin{enumerate}
\singlespacing
    \item[$\tau_1$:] $u$ sends $\beta$-request to $v$;
    \item[$\tau_2$:] $v$ forwards $\beta$-request to $w$; 
    \item[$\tau_3$:] $w$ sends lock-request to $z$;
    \item[$\tau_4$:] $z$ sends lock-ack to $w$, locks itself and updates link $w.l\lor w.r = u$;
    \item[$\tau_5$:] $w$ sends lock-ack to $v$, locks itself, updates links $w.p=v \land w.l=t_3$ and sends link-change to $t_3$;
    \item[$\tau_6$:] $v$ sends lock-ack to $u$, locks itself, updates links $v.p=u \land v.l=t_2 \land v.r=w$ and sends link-change to $t_2$;
    \item[$\tau_6$:] $t_3$ updates link $t_3.p=w$ and buffer $\mathcal{B}_{t_3} \cup \mathcal{B}_{w} \setminus \mathcal{B}_v$, and sends buffer-change to $w$;
    \item[$\tau_7$:] $u$ locks itself, updates links $u.p=w.p \land u.r=v$ and buffer $\mathcal{B}_u \cup \mathcal{B}_{v} \setminus \mathcal{B}_{t_2}$;
    \item[$\tau_7$:] $u$ frees itself and sends $\beta$-ack to $v$ and buffer-change to $z$;
    \item[$\tau_7$:] $t_2$ updates link $t_2.p=v$ and buffer $\mathcal{B}_{t_2} \cup \mathcal{B}_{v} \setminus \mathcal{B}_u$, and sends buffer-change to $v$;    
    \item[$\tau_8$:] $z$ frees itself and updates its buffer buffer $\mathcal{B}_z \cup \mathcal{B}_{u} \setminus \mathcal{B}_w$;
    \item[$\tau_8$:] $v$ frees itself; forwards $\beta$-ack to $w$ and updates buffer $\mathcal{B}_v \cup \mathcal{B}_{t_2} \cup \mathcal{B}_{w} \setminus \{t_3\} \setminus \{u\}$;    
    \item[$\tau_9$:]\label{item:lastStep} $w$ frees itself and updates buffer $\mathcal{B}_w \cup \mathcal{B}_{t_3} \setminus \mathcal{B}_{v}$;

\end{enumerate}

This sequence of steps takes at least 9 time-slots, but the steps are not necessarily consecutive and ordered in this way, due to concurrency. In Lemma \ref{claim:round}, we provide an upper bound on the number of time-slots a rotation takes, considering concurrency.

    \begin{algorithm}[H]
    \begin{algorithmic}
    \scriptsize
    \State\textbf{s:} request-splay(d)
    \While{$s\neq \alpha(s,d)$ and $s.p\neq d$}
        \State\textbf{s:} \textbf{generate} $\beta$(s)
    \EndWhile
    \If{$s.p = d$}
        \State wait message
    \Else
        \If{$s$ is no longer $\alpha(s,d)$}
            \State go back to \textbf{while}
        \Else
            \If{$s.r = d $ or $s.l = d$}
                \State send message
            \EndIf  
        \EndIf
    \EndIf
    \While{$d\neq \alpha(s,d)$ and $d.p\neq s$}
        \State\textbf{d:} \textbf{generate} $\beta$(d)
    \EndWhile
    \If{$d.p = s$}
        \State wait message
    \Else
        \If{$d$ is no longer $\alpha(s,d)$}
            \State go back to \textbf{while}
        \Else
            \If{$d.r = s $ or $d.l = s$}
                \State send message
            \EndIf  
        \EndIf
    \EndIf
    \end{algorithmic}
    \caption{Splay $\mathcal{S}(s,d)$}\label{alg:splay}
    \end{algorithm}

    \begin{algorithm}[H]
    \begin{algorithmic}
    \scriptsize
    \State Upon generating ($\beta(u)$)
    \State \textbf{insert-buffer}($\beta(u)$)
    \State \textbf{request-$\beta$}($\beta(u)$) to \textbf{v}
    \If{$\mathcal{B}[0]=\beta(u)$ $\land$ ack-$\beta$($\beta(u)$) from \textbf{v}}
    \State \textbf{rotate}($\beta(u)$)
    \EndIf    
    \end{algorithmic}
    \caption{Rotation $\beta(u)$ at $u$}\label{alg:rotation-u}
    \end{algorithm}    
    
    \begin{algorithm}[H]
    \begin{algorithmic}
    \scriptsize
    \State Upon receiving $\beta(u)$
    \State \textbf{insert-buffer}($\beta(u)$)
    \If{$v \neq \alpha(u,d_u)$}
        \State forward \textbf{request-$\beta$($\beta(u)$)} to \textbf{w}
        \If{$\mathcal{B}[0]=\beta(u)$ $\land$ ack-$\beta$($\beta(u)$) from \textbf{w}}
            \State \textbf{ack-$\beta$}($\beta(u)$) to \textbf{u}
            \State \textbf{rotate}($\beta(u)$)
        \EndIf
    \Else
        \State \textbf{request-lock}($\beta(u)$) to \textbf{w}
        \If{ $\mathcal{B}[0]=\beta(u)$ $\land$ ack-lock($\beta(u)$) from \textbf{w}}
            \State \textbf{ack-$\beta$}($\beta(u)$) to \textbf{u}
            \State \textbf{rotate}($\beta(u)$)
        \EndIf
    \EndIf
    
    \singlespacing
    \State Upon connecting to a new child
    \State {\tabs}send \textbf{link-chage} to new child
    
    \singlespacing
    \State Upon receiving \textbf{buffer-change}(from new child)
    \State {\tabs}update buffer
    \end{algorithmic}
    \caption{Rotation $\beta(u)$ at $v$}\label{alg:rotation-v}
    \end{algorithm}

    \begin{algorithm}[H]
    \begin{algorithmic}
    \scriptsize
    \State Upon receiving $\beta(u)$
    \State\textbf{insert-buffer}($\beta(u)$)
    \If{$v \neq \alpha(u,d_u)$}
        \State\textbf{request-lock} to \textbf{w.p}
        \If{$\mathcal{B}[0]=\beta(u)$ $\land$ ack-lock($\beta(u)$) from \textbf{w.p}}
            \State \textbf{ack-$\beta$}($\beta(u)$) to \textbf{v}
            \State \textbf{rotate}($\beta(u)$)
        \EndIf
    \Else
        \If{$\mathcal{B}[0]=\beta(u)$}
            \State \textbf{ack-lock}($\beta(u)$) to \textbf{v}
        \EndIf
    \EndIf

    \singlespacing
    \State Upon connecting to a new child
    \State {\tabs}send \textbf{link-chage} to new child
    
    \singlespacing
    \State Upon receiving \textbf{buffer-change}(from new child)
    \State {\tabs}update buffer
    \end{algorithmic}
    \caption{Rotation $\beta(u)$ at $w$}\label{alg:rotation-w}
    \end{algorithm}

   \begin{algorithm}[H]
    \begin{algorithmic}  
    \scriptsize
    \State Upon receiving \textbf{request-lock}($\beta(u)$)
    \State \textbf{insert-buffer}($\beta(u)$)
    \If{$\mathcal{B}[0]=\beta(u)$}
        \State \textbf{ack-lock($\beta(u)$)} to \textbf{w}
    \EndIf
    \singlespacing
    \State Upon receiving \textbf{buffer-change}(from u)
    \State {\tabs}update buffer
    \end{algorithmic}
    \caption{Rotation $\beta(u)$ at $w.p$}\label{alg:rotation-wp}
    \end{algorithm}
    
    \begin{algorithm}[H]
    \begin{algorithmic}  
    \scriptsize
    \State Upon receiving \textbf{link-change}(from $x$ about $relationship$)
    \State {\tabs} y.relationship = $x$
    \State {\tabs} send \textbf{buffer-update} to $x$
    \end{algorithmic}
    \caption{Link change at $y$}\label{alg:rotation-x}
    \end{algorithm}
\chapter{Concurrency Analysis:}\label{chap:analysis}
The analysis of the distributed and concurrent SplayNet is structured as follows. In Section \ref{subsec:loopsDeadlocks}, we show that the \textit{priority routine} prevents loops (Lemma \ref{claim:noLoops}) and deadlocks (Lemma \ref{claim:noDeadlocks}) from occurring between concurrent rotations. In Section \ref{subsec:round}, we show that the duration of a round, i.e., the time between a node requesting and completing a rotation, is $O(\log{m})$, where $m$ is the number of concurrent splay requests in the network (Lemma \ref{claim:round}). In Section \ref{subsec:amortized}, we compute the total amortized average cost of a splay request (Theorem \ref{thm:totalcost}) and as a function of the \textit{empirical entropies} of source and destination nodes of the splay requests (Theorem \ref{thm:finalEntropies}).

\section{Loops and deadlocks}
\label{subsec:loopsDeadlocks}

\begin{defn}\label{def:loop} \textbf{Infinite loop:} Consider a SplayNet $\mathcal{T}$, a set of concurrent splay requests $\mathcal{F}$, a splay request $\mathcal{S}(a_i,d_{a_i}) \in \mathcal{F}$, and a time-slot $\tau_1$, such that distance $d_{\tau_1}(a_i,d_{a_i}) > 1$. An infinite loop is said to occur in $\mathcal{T}$ when there are an infinite number of (possibly non-consecutive) time-slots $\tau_k > \tau_1$, in which the following two conditions hold:
\begin{enumerate}
    \item The parent or grandparent of $a_i$ remains the same in every (non-consecutive) time-slot $\tau_k$, i.e., $a_i.p(\tau_k)=a_i.p(\tau_1) \lor a_i.p.p(\tau_k)=a_i.p.p(\tau_1), \forall \tau_k$. 
    \item The distance to the destination does not decrease relative to time-slot $\tau_1$, i.e. $d_{\tau_k}(a_i,d_{a_i}) \geq d_{\tau_1}(a_i,d_{a_i}), \forall \tau_k$. 
\end{enumerate}
\end{defn}

\begin{lemma}\label{claim:noLoops}
Priority routine prevents loops from occurring in the network. 
\end{lemma}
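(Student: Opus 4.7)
The plan is a proof by contradiction. Assume an infinite loop arises for splay $\mathcal{S}(a_i,d_{a_i})$ from some time-slot $\tau_1$, and derive that the priority routine nevertheless forces $a_i$ to rotate enough times to violate condition~(2) of Definition~\ref{def:loop}.

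First I would show that the priority routine of Definition~\ref{def:priority} induces a strict lexicographic total order on the set of all pending rotation requests, keyed by (super-round, round, hierarchy level, requester ID). Since each node generates at most one rotation request per round, the set of requests with super-round and round bounded by any given pair is finite, so the order is well-founded.

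Second I would establish, by induction on this well-founded order, a service lemma: every rotation request $\beta(u)$ is completed in finitely many time-slots. For the base case, the globally minimum pending request sits at position $\mathcal{B}[0]$ of every buffer whose lock it needs (by the priority routine and the buffer-update rules of Section~\ref{subsec:buffer}); no pending request with smaller priority exists to preempt it, so by Algorithms~\ref{alg:rotation-u}--\ref{alg:rotation-wp} the lock-ack cascade fires and the rotation completes within a bounded number of time-slots. The inductive step follows by removing the minimum and repeating, since any request blocking a given $\beta$ must be strictly smaller in the priority order and there are only finitely many such candidates.

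Third I would apply the service lemma to $a_i$ itself. As long as $a_i\neq\alpha(a_i,d_{a_i})$ and $a_i.p\neq d_{a_i}$, Algorithm~\ref{alg:splay} makes $a_i$ generate a fresh $\beta(a_i)$ with a strictly increasing round number, which by the service lemma completes in finitely many time-slots and strictly decreases $d(a_i,\alpha(a_i,d_{a_i}))$ while replacing $a_i.p$ (zig) or $a_i.p.p$ (zig-zig/zig-zag) by a strictly shallower ancestor. A symmetric argument applies to $d_{a_i}$. Hence after finitely many rotations $d(a_i,d_{a_i})=1$, so condition~(2) of Definition~\ref{def:loop} fails for all later time-slots, contradicting the assumption of infinitely many loop time-slots. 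The main obstacle I anticipate is the bookkeeping for buffer migration: when a concurrent rotation elsewhere in the network changes links around $a_i$, $a_i$'s request must retain its (super-round, round, hierarchy, ID) key and must be correctly re-inserted into the buffers of its new neighbors. For this I would invoke the buffer-change and link-change rules (Algorithm~\ref{alg:rotation-x} together with the five update events listed in Section~\ref{subsec:buffer}), which guarantee that $a_i$'s entry is neither duplicated nor lost across such topological updates, so that the induction underlying the service lemma remains valid.
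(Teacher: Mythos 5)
There is a genuine gap at the point where you claim that each completed rotation $\beta(a_i)$ ``strictly decreases $d(a_i,\alpha(a_i,d_{a_i}))$'' and that therefore finitely many rotations suffice to reach $d(a_i,d_{a_i})=1$. This treats the distance to the destination as a quantity that only $a_i$'s own rotations affect, and monotonically so. In the concurrent setting that is false: a rotation performed by a neighboring node can \emph{increase} $d(a_i,d_{a_i})$, and can change $\alpha(a_i,d_{a_i})$ altogether. For instance, if $b=a.p$ and $a$ performs a zig-zag in which $b$ is demoted, then $d(b,d_b)$ may grow by one while $d(a,d_a)$ shrinks; the ``infinite loop'' of Definition~\ref{def:loop} is precisely the scenario in which two or more such nodes keep undoing each other's progress forever while each individual rotation still completes in finite time. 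Your service lemma (every pending request is eventually served, by well-founded induction on the priority order) rules out deadlock and starvation --- essentially the content of Lemmas~\ref{claim:noDeadlocks} and~\ref{claim:round} --- but it does not rule out livelock of this kind. The paper's proof is devoted exactly to the part you skip: an exhaustive case analysis of every configuration in which two conflicting requesters are parent/child, grandparent/grandchild, or siblings, checking in each case that after the prioritized pair of rotations the distances to the destinations have strictly decreased in net (or the configuration reduces to a previously handled case), followed by a reduction of loops among $k\ge 3$ nodes to the pairwise cases. Without an argument of this type, or some other progress measure that is provably monotone under \emph{all} concurrent rotations rather than just $a_i$'s own, the contradiction in your third step does not follow.

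A secondary issue: the ``hierarchy level'' key in your lexicographic order is not an intrinsic attribute of a request but depends on the observing node's position relative to the requester, so the priority routine does not immediately define a single global total order; two nodes could a priori rank the same pair of requests oppositely. Ruling this out is itself a small argument (it is the ``inconsistent buffers'' case of Lemma~\ref{claim:noDeadlocks}), and the base case of your service lemma --- that the globally minimal request sits at $\mathcal{B}[0]$ of every buffer whose lock it needs --- silently relies on it.
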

\begin{proof}
Consider a SplayNet $\mathcal{T}$ and a set of concurrent splay requests $\mathcal{F}$ in super-round $\mathcal{R}$.
We divide the loop possibilities in two main cases: (Case 1) a loop between two nodes; and (Case 2) a loop between more than two nodes. 

\begin{enumerate}
    \item[Case 1:] \label{case:loop2nodes} \textbf{Loop between two nodes:}
    Since $\mathcal{T}$ has no cycles in a given time-slot, this scenario can only occur between two nodes that participate or are locked in each others rotation requests. There can be three configurations:

    \begin{enumerate}
        \item \textbf{Loop with a parent:} \label{case:loopParent} $\mathcal{S}(b,d_{b})$, $\mathcal{S}(a,d_{a})\in\mathcal{F}$ $\mid b = a.p(t)$ in round $t \in \mathcal{R}$. Consider the following rotations requests: $\beta_{t}(b)$, $\beta_{t}(a)$, $\beta_{t+1}(b)$, $\beta_{t+1}(a)$. Note that \textit{priority routine} puts $\beta_{t}(b)$ before $\beta_t(a)$. There are four possibilities:
        
        \begin{enumerate}
            \item $\beta_t(a)$ is a zig: If $\beta_t(a)$ is a zig, then it is the last rotation sourced at $a$ in splay $\mathcal{S}(a,d_{a})$, since $b = \alpha_t(a,d_a)$. So, $a$ and $b$ cannot be in an infinite loop.
            
            \item $\beta_t(b)$ is a zig: If $\beta_t(b)$ is a zig, then it is the last rotation sourced at $b$ in splay $\mathcal{S}(b,d_{b})$, since $b.p(t) = \alpha_t(b,d_b)$ or $b.p = s_b$ for some $\mathcal{S}(s_b,b)\in\mathcal{F}$. So, $a$ and $b$ are not in an infinite loop.
            
            \item $\beta_t(b)$ is a zig-zig: If $\beta_t(b)$ is a zig-zig (Figure \ref{fig:loopparentzigzig}), $a$ is a \textit{carried child} and moves one or two hops upwards with $\beta_t(b)$. If $\beta_t(a)$ is performed right after $\beta_t(b)$, then $\beta_t(a)$ is a zig-zag. After $\beta_t(a)$, $d(a,d_a)$ decreases by two, and $d(b,d_b)$ does not change or increases by one. In both cases, $a = b.p(t+1)$, and $\beta_{t+1}(a)$ has priority over $\beta_{t+1}(b)$. After $\beta_{t+1}(a)$, $b$ is carried one or two hops upwards. In this way, even though $a$ and $b$ keep rotating with each other through the rounds, the distances to their destinations decrease. Therefore, by Definition~\ref{def:loop}, $a$ and $b$ are not in an infinite loop.
                
            \item $\beta_t(b)$ is a zig-zag: If $\beta_t(b)$ is a zig-zag (Figure \ref{fig:loopparentzigzag}), $a$ will be carried one hop upwards. If $\beta_t(a)$ is performed right after $\beta_t(b)$, then $\beta_t(a)$ is a zig-zag. After $\beta_t(a)$, $d(a,d_a)$ decreases by two, and $d(b,d_b)$ increases by one. In both cases, $a = b.p(t+1)$, and $\beta_{t+1}(a)$ has priority over $\beta_{t+1}(b)$. After $\beta_{t+1}(a)$, $b$ is carried one or two hops upwards. In this way, even though $a$ and $b$ keep rotating with each other through the rounds, the distances to their destinations decrease. Therefore, $a$ and $b$ are not in an infinite loop.
        \end{enumerate}
        
        \item \textbf{Loop with a grandparent:} \label{case:loopGP} $\mathcal{S}(c,d_{c})$, $\mathcal{S}(a,d_{a})\in\mathcal{F}$ $\mid b = a.p(t), c=b.p(t)$ in round $t \in \mathcal{R}$. Consider the following rotations requests: $\beta_{t}(c)$, $\beta_{t}(a)$, $\beta_{t+1}(c)$, $\beta_{t+1}(a)$. Note that \textit{priority routine} puts $\beta_{t}(c)$ before $\beta_t(a)$. There are two possibilities\footnote{We do not consider the case where $\beta(a)$ as a zig because it's clear that $c$ does not participate in $\beta(b)$ in this case.}:
        \begin{enumerate}
            \item $\beta(c)$ is a zig-zig: After $\beta_t(c)$, there are two possibilities:
            \begin{enumerate}
            
            \item $b$ is an \textit{abandoned child}: (Figure \ref{fig:loopzigzig1}) $c$ will not participate in $\beta_t(a)$, because $d_t(c,a) = 3$. So, there is no loop.
            
            \item $b$ is a \textit{carried child}:       
               \\
                    I.{  } $\beta_t(a)$ is a zig-zig (Figure \ref{fig:loopzigzig2.1}): After $\beta_t(a)$, $d(c,d_c)$ will increase by two. In the next round, $\beta_{t+1}(a)$ has priority over $\beta_{t+1}(c)$, because $a = c.p.p(t+1)$. Thus, $c$ will be carried by $\beta_{t+1}(a)$ and then will perform $\beta_{t+1}(c)$. After those four rotations, $d(c,d_c)$ decreases by at least 3. Thus, the distance to the destination decreases for nodes $a$ and $c$. Therefore, $a$ and $c$ are not in an infinite loop.
                    \\
                    II.{  } $\beta_t(a)$ is a zig-zag (Figure \ref{fig:loopzigzig2.2}): After $\beta_t(c)$, $d_{t+1}(c,a) = 2$ and $d_{t}(a,d_a)$ is decreased by two. After $\beta_{t+1}(a)$, $a = c.p(t+1)$. Thus, any further rotations fit into Case 1(a). 
            \end{enumerate}
            
            \item $\beta_t(c)$ is a zig-zag: (Figure \ref{fig:loopzigzag}) $d_{t_1}(c,a)=3$. Thus, $c$ will not participate in $\beta_t(a)$, regardless of $\beta_t(a)$ being a zig-zig or zig-zag.
        \end{enumerate}
        
        \item \textbf{Loop with a sibling:}  $\mathcal{S}(t3,d_{t3})$, $\mathcal{S}(a,d_{a})\in\mathcal{F}$ $\mid b = a.p = t3.p(t)$ in round $t \in \mathcal{R}$. W.l.g., $a = b.l(t)$ and $t3 = b.r(t)$. Consider the following rotations requests: $\beta_{t}(a)$, $\beta_{t}(t3)$, $\beta_{t+1}(a)$, $\beta_{t+1}(t3)$. Note that \textit{priority routine} puts $\beta_{t}(a)$ before $\beta_t(t3)$.
            \begin{enumerate}
                \item $\beta_t(a)$ or $\beta_t(t3)$ is a zig: one of source nodes will terminate its rotations, and no loop will be generated (Figure \ref{fig:loopsiblings}).
                \item $\beta_t(a)$ is a zig-zig: after $\beta_t(a)$, $d_{t+1}(a,t3) = 3$, thus $a$ will not participate in $\beta_t(t3)$. Therefore, $a$ and $t3$ are not in an infinite loop.
                \item $\beta_t(a)$ is a zig-zag: After $\beta_t(a)$, $a = t3.p.p(t+1)$. Therefore, $\beta_t(t3)$ fits into Case 1(b).
            \end{enumerate}
    \end{enumerate}    
    
\item[Case 2:] \textbf{Loop among $\geq 3$ nodes:} Let's assume the existence of a loop comprised of $k$ nodes: $\{a_1,a_2,\ldots, a_k\} \mid 2 < k \leq n$. To be in an infinite loop, $\exists$ infinite time-slots $\{\tau_k\}$, such that no $a_i$ advances towards its destination and nodes become each others parent/grandparent in a circular list: $a_i = a_{i+1}.p(\tau_k)$ or $a_i = a_{i+1}.p.p(\tau_k)$, $\ldots$, and $a_k = a_{1}.p(\tau_k')$ or $a_k = a_{1}.p.p(\tau_k')$. Consider node $a_k$ in time slot $\tau'$. For it to become a parent or grandparent of $a_i$, it must exchange places with $\{a_{k-1},\ldots,a_2\}$ because all these nodes are ancestors of $a_k$ and descendants of $a_1$ and, therefore, are between $a_k$ and $a_1$. This means that, $a_k$ will exchange parent/child or grandparent/grandchild roles and, therefore, be in a loop with each $a_{i}$. This is a contradiction, because it was proven in Case 1 that there can be no loop between any pair of nodes in $\mathcal{T}$.
\end{enumerate}

\begin{figure}
	\centering
	\begin{subfigure}{0.4\textwidth} 
		\includegraphics[width=\textwidth]{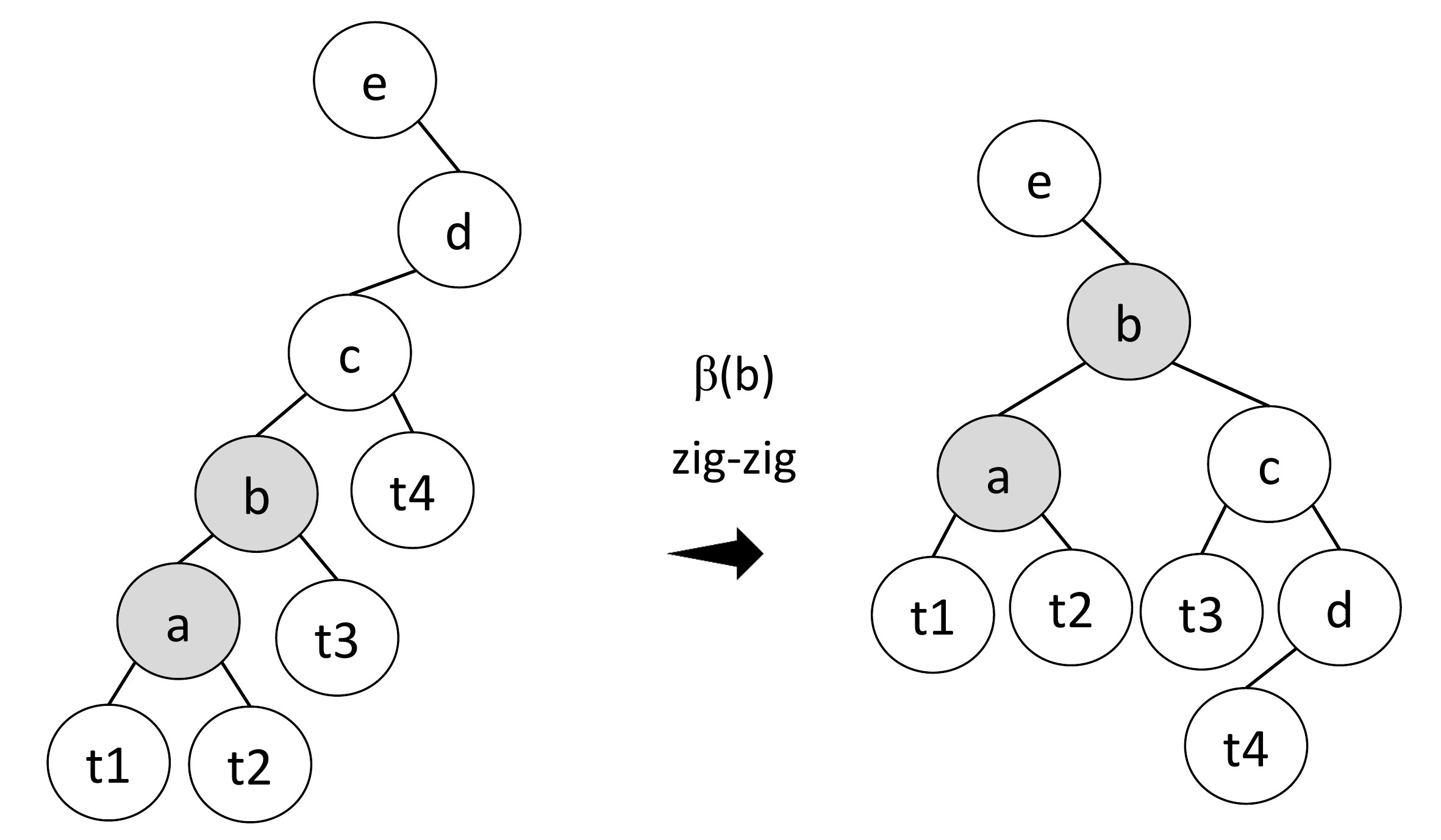}\caption{}
		\label{fig:bzigzig1}
	\end{subfigure}
	\vspace{1em} 
	\begin{subfigure}{0.4\textwidth} 
		\includegraphics[width=\textwidth]{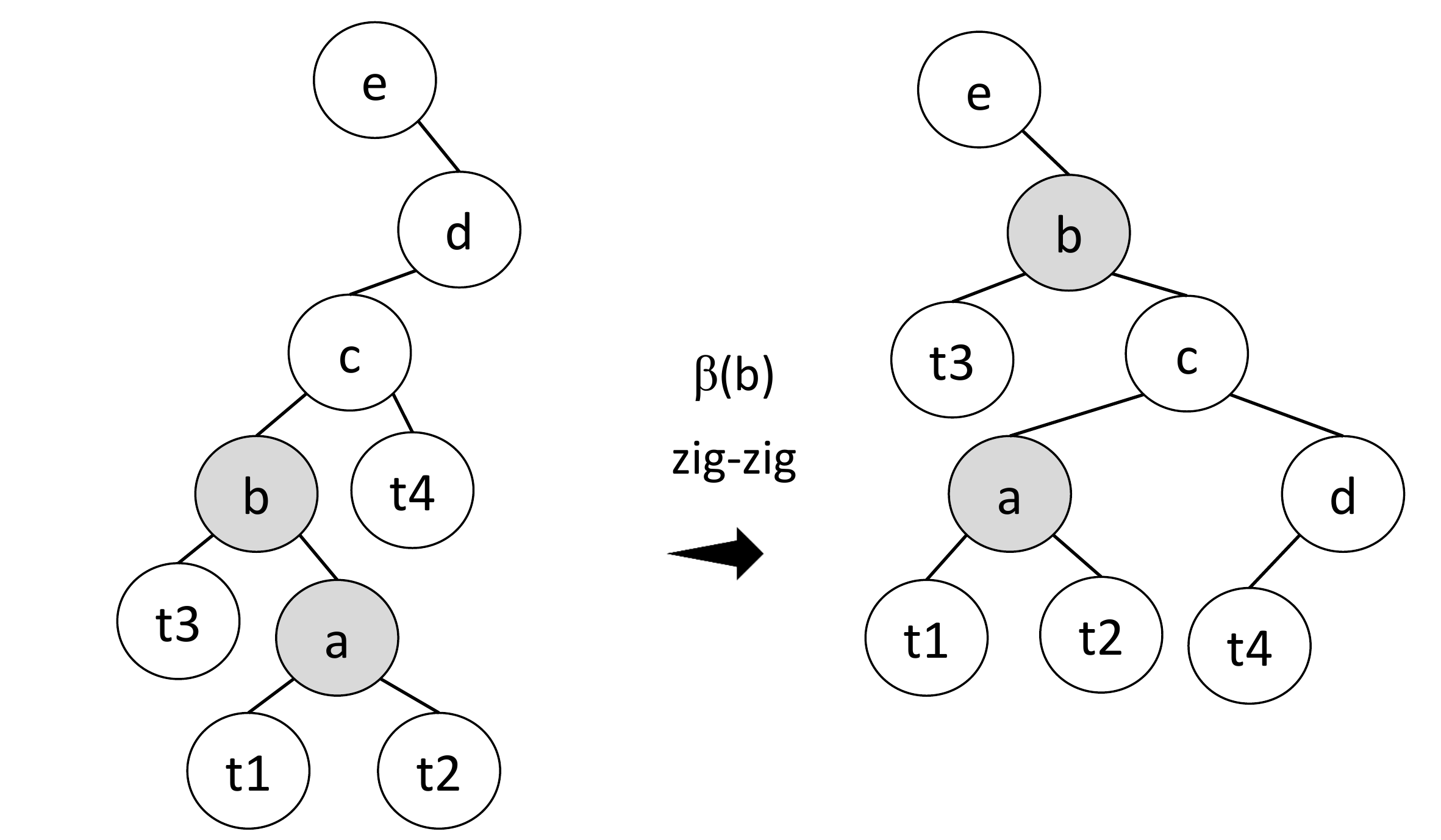}\caption{}
		\label{fig:bzigzig2}
	\end{subfigure}	
	\begin{subfigure}{0.4\textwidth} 
		\includegraphics[width=\textwidth]{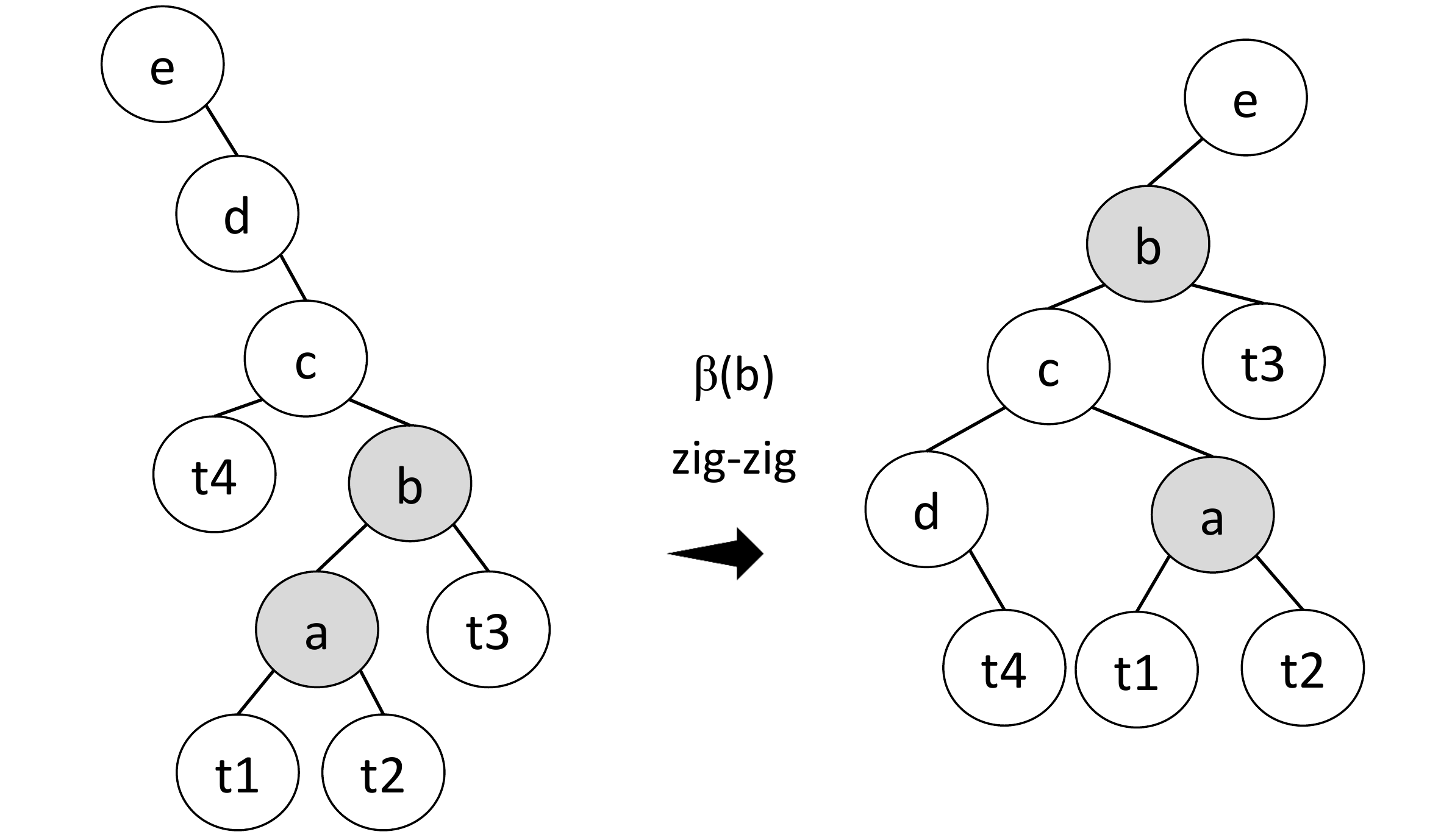}\caption{}
		\label{fig:bzigzig3}
	\end{subfigure}	
	\begin{subfigure}{0.4\textwidth} 
		\includegraphics[width=\textwidth]{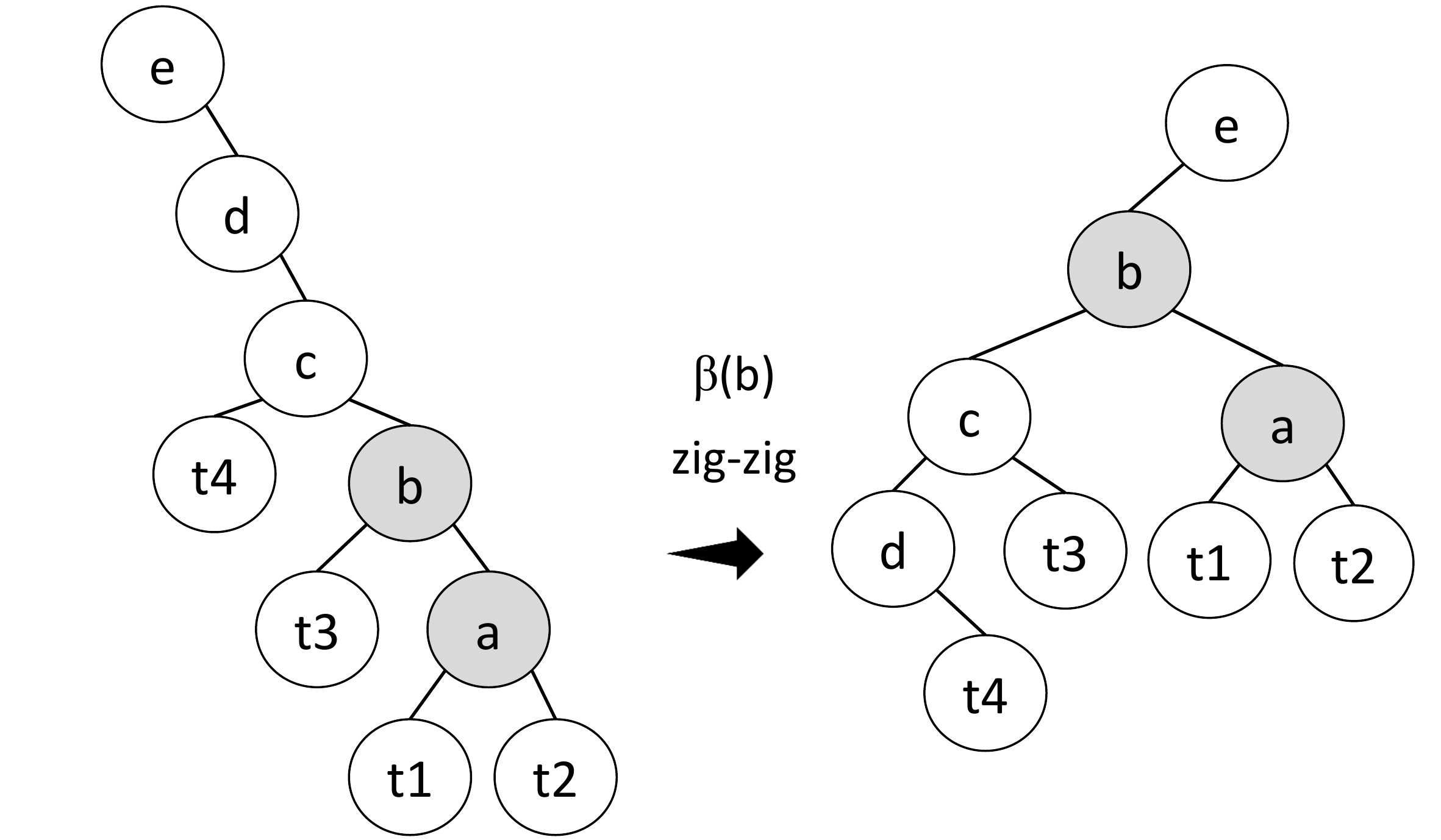}\caption{}
		\label{fig:bzigzig4}
	\end{subfigure}
	\caption{Loop with a parent (Case 1.a.iii): $b = a.p(t), \beta_t(b)$ is a zig-zig. Even though $a,b$ rotate with each other repeatedly, both move upwards.} 
	 \label{fig:loopparentzigzig}
\end{figure}

\begin{figure}
	\centering
	\begin{subfigure}{0.4\textwidth} 
		\includegraphics[width=\textwidth]{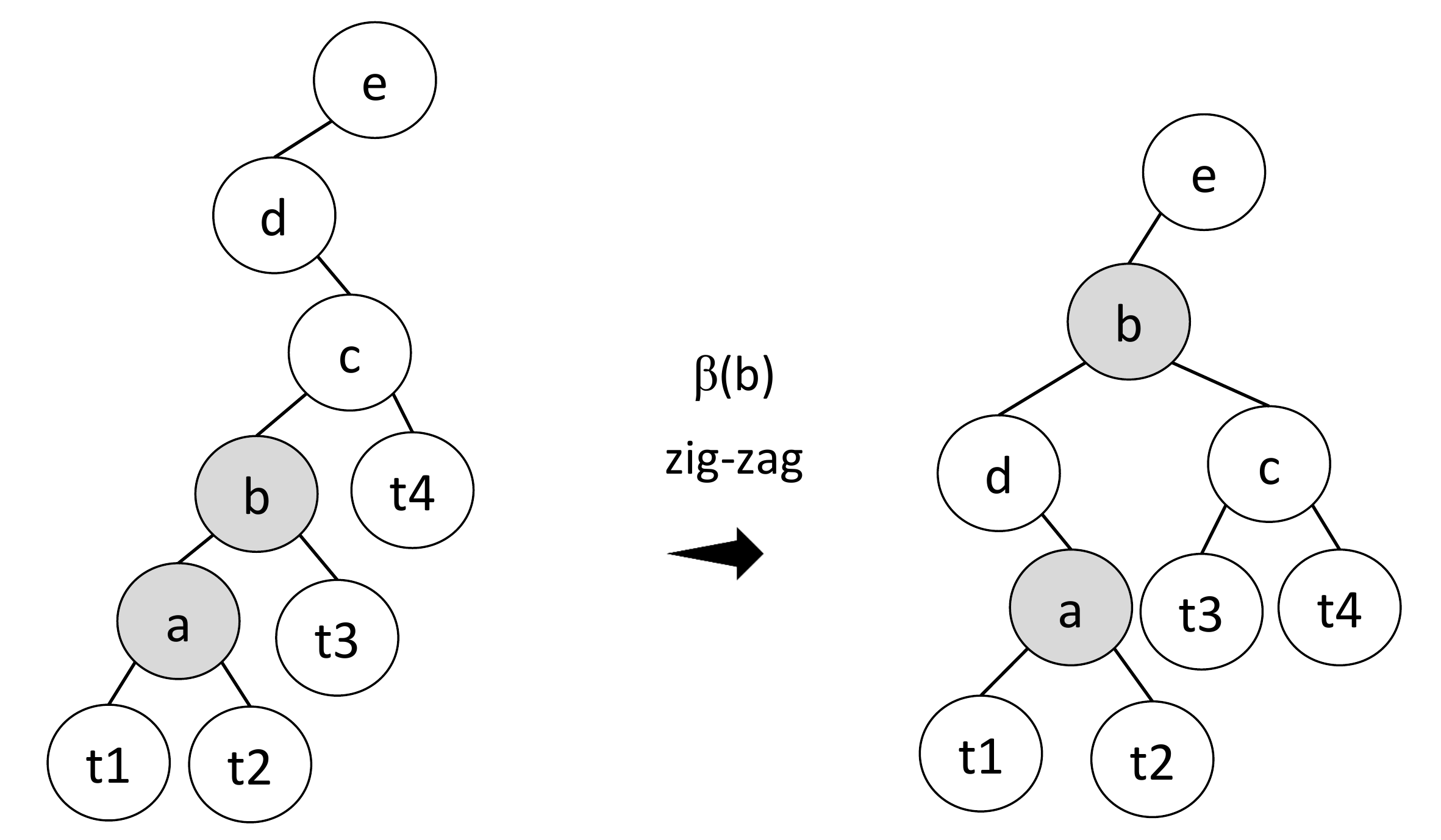}\caption{}
		\label{fig:bzigzag1}
	\end{subfigure}
	\vspace{1em} 
	\begin{subfigure}{0.4\textwidth} 
		\includegraphics[width=\textwidth]{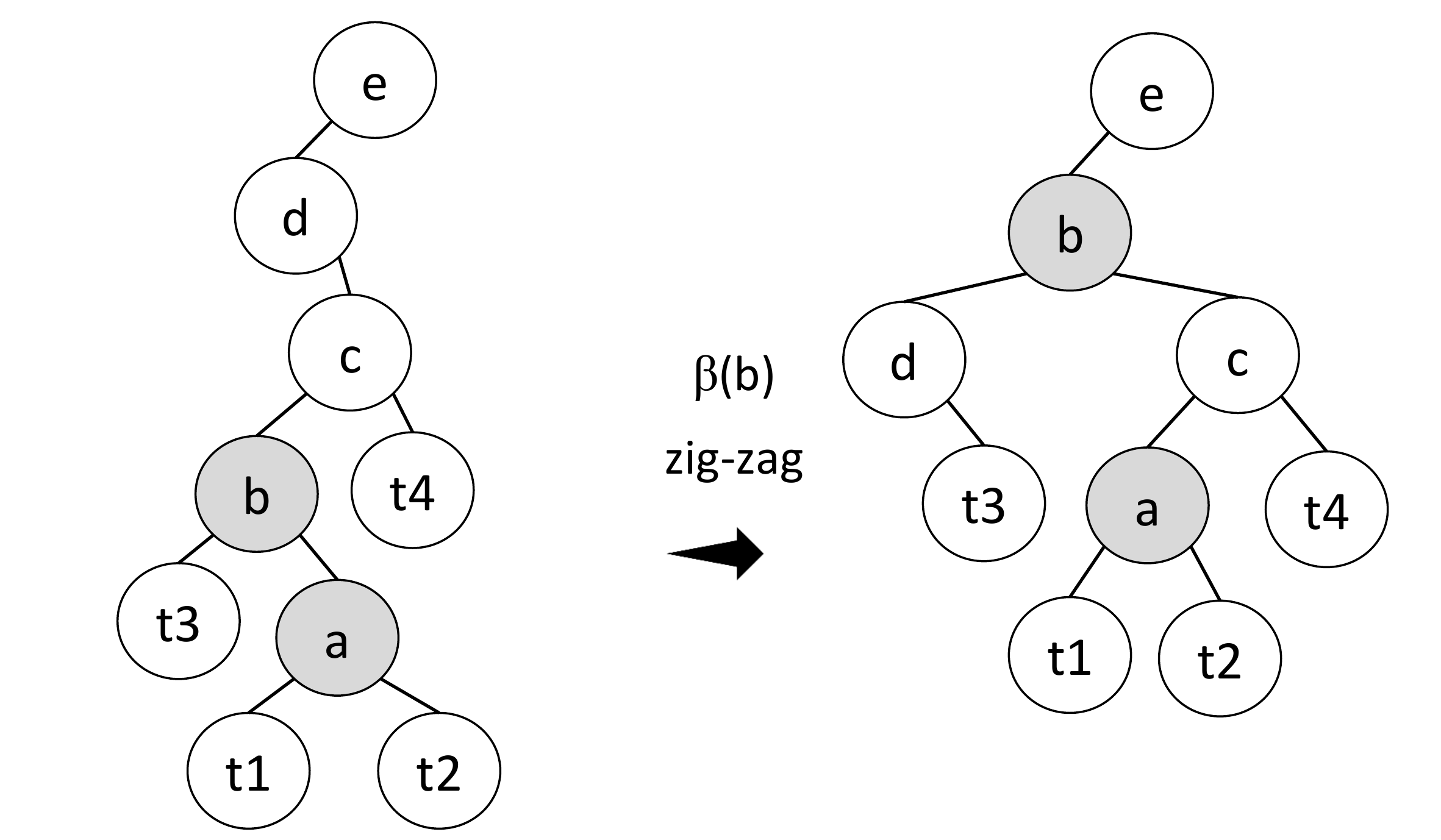}\caption{}
		\label{fig:bzigzag2}
	\end{subfigure}	
	\begin{subfigure}{0.4\textwidth} 
		\includegraphics[width=\textwidth]{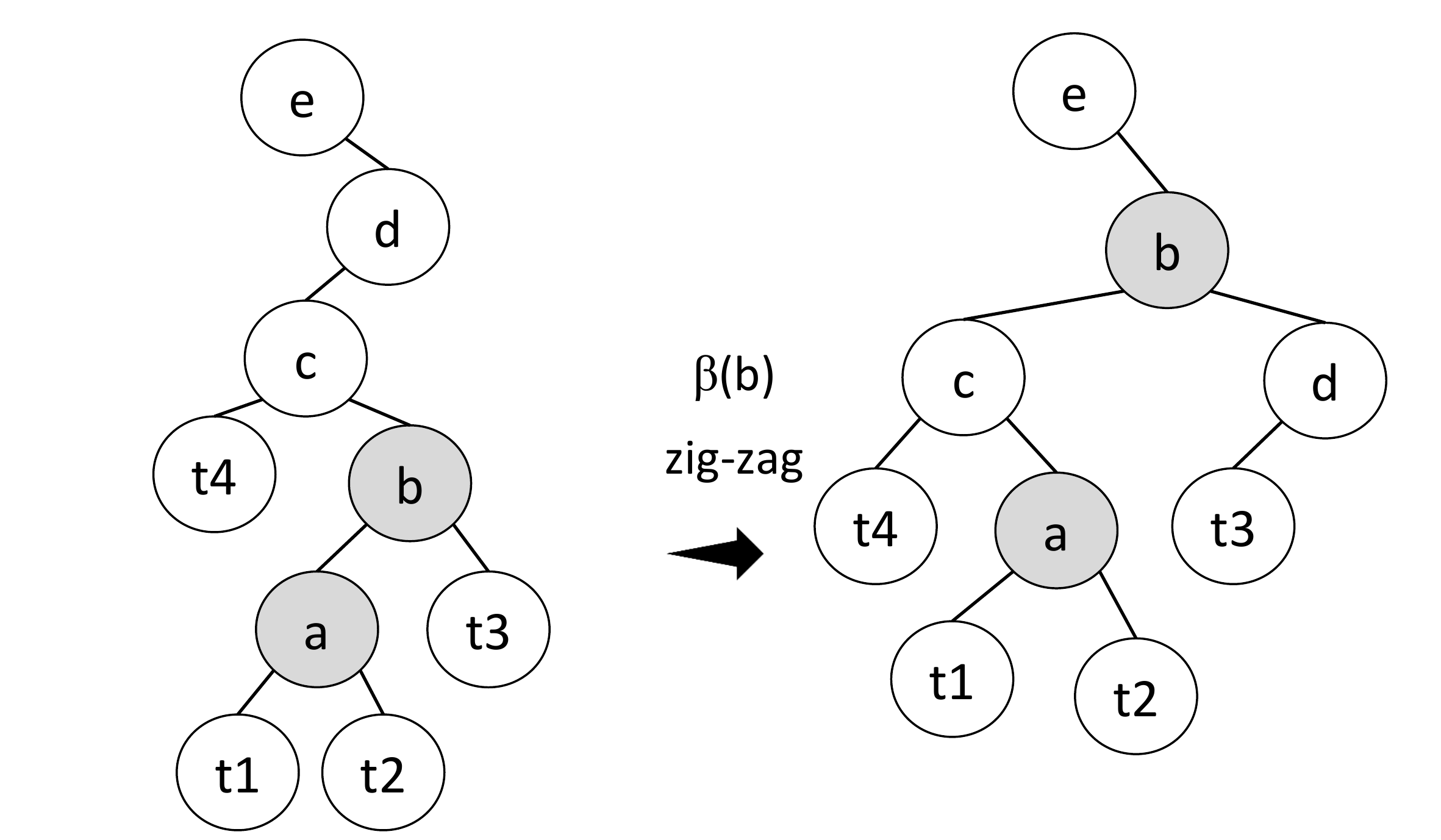}\caption{}
		\label{fig:bzigzag3}
	\end{subfigure}	
	\begin{subfigure}{0.4\textwidth} 
		\includegraphics[width=\textwidth]{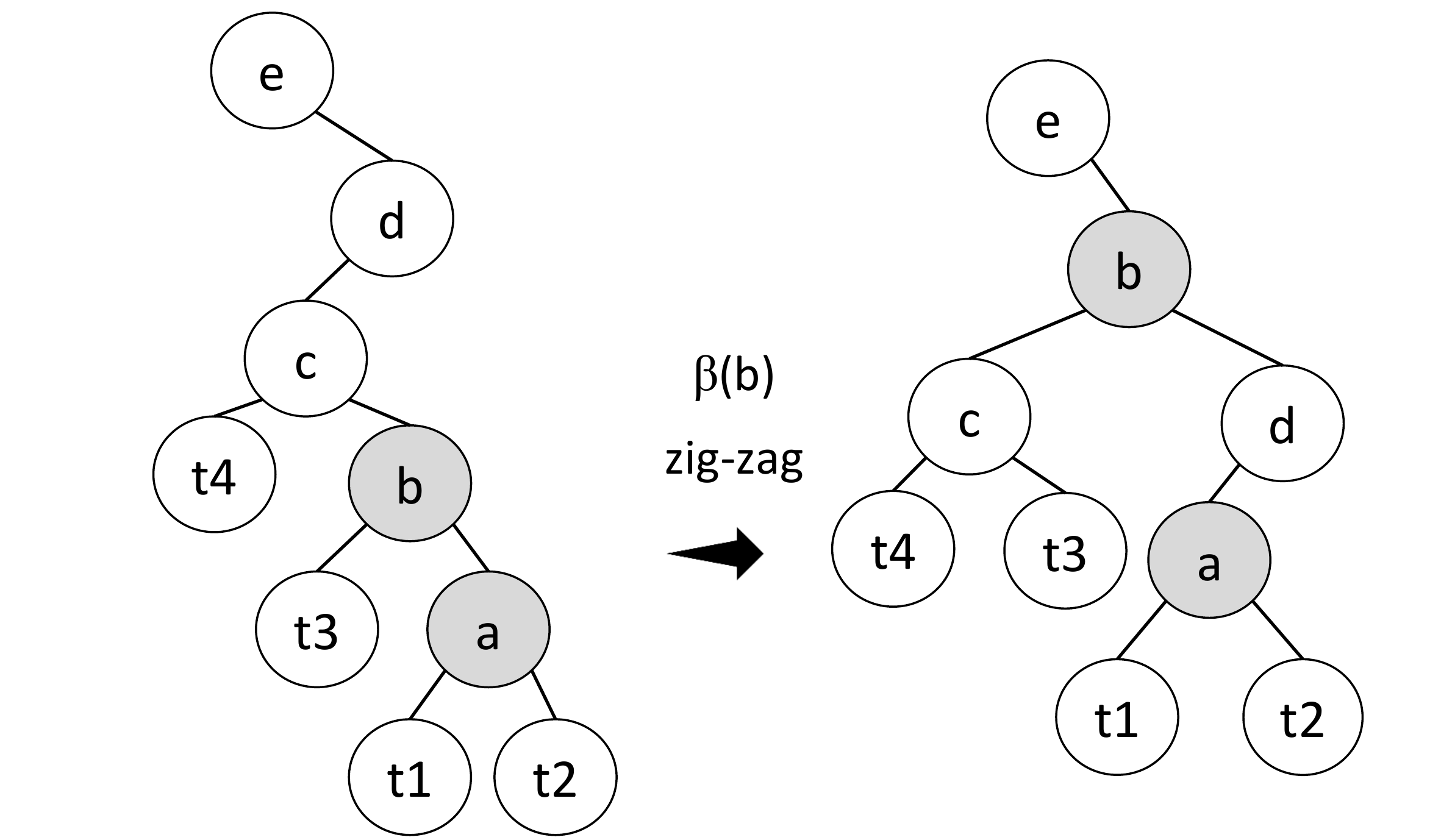}\caption{}
		\label{fig:bzigzag4}
	\end{subfigure}
    \caption{Loop with a parent (Case 1.a.iv): $b=a.p(t), \beta_t(b)$ is a zig-zag. Even though $a,b$ rotate with each other multiple times, both advance towards their destinations.}
    \label{fig:loopparentzigzag}
\end{figure}

\begin{figure}
	\centering
	\begin{subfigure}{0.4\textwidth} 
		\includegraphics[width=\textwidth]{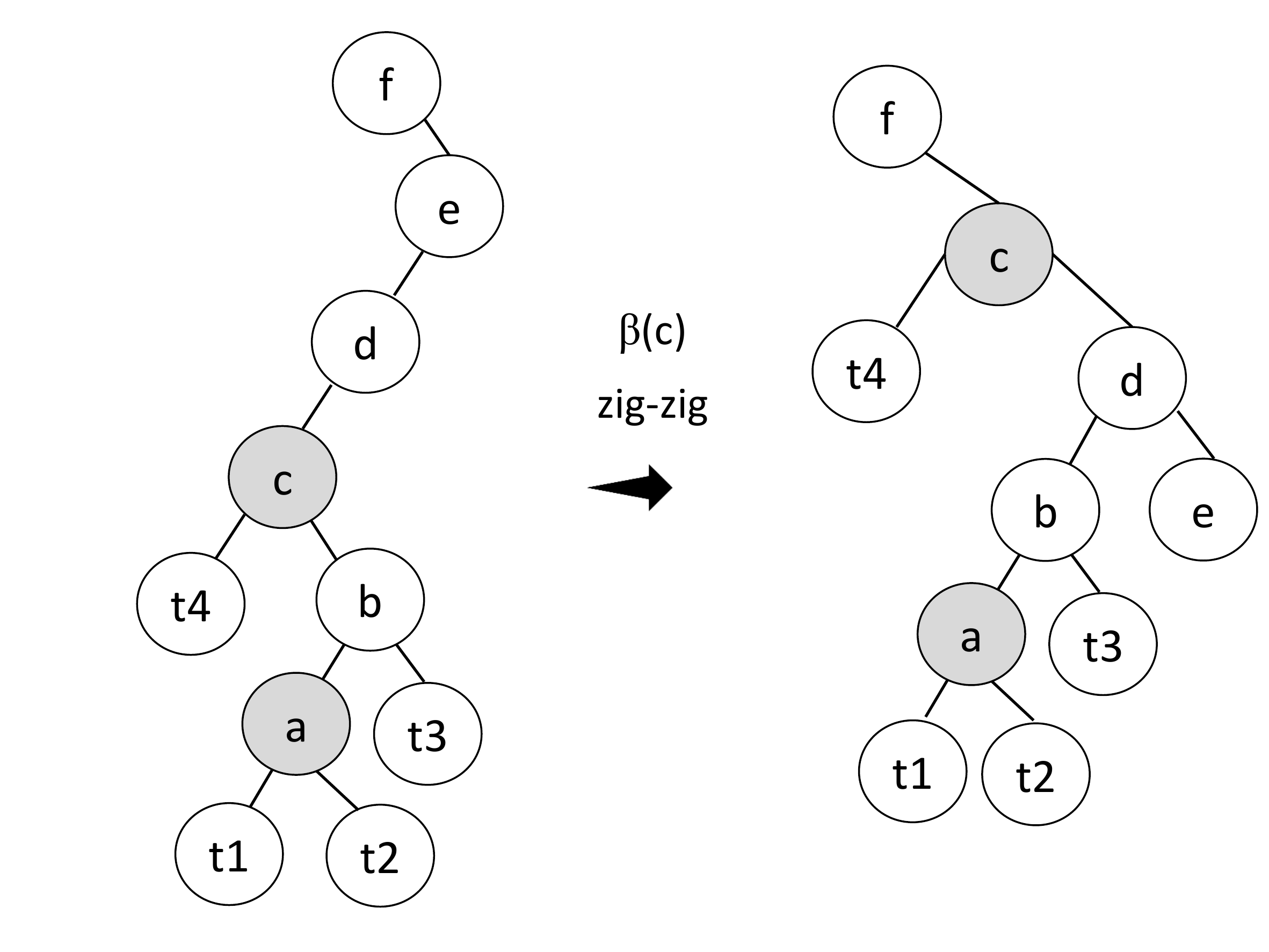}\caption{}
		\label{fig:zigzig2}
	\end{subfigure}
	\vspace{1em} 
	\begin{subfigure}{0.4\textwidth} 
		\includegraphics[width=\textwidth]{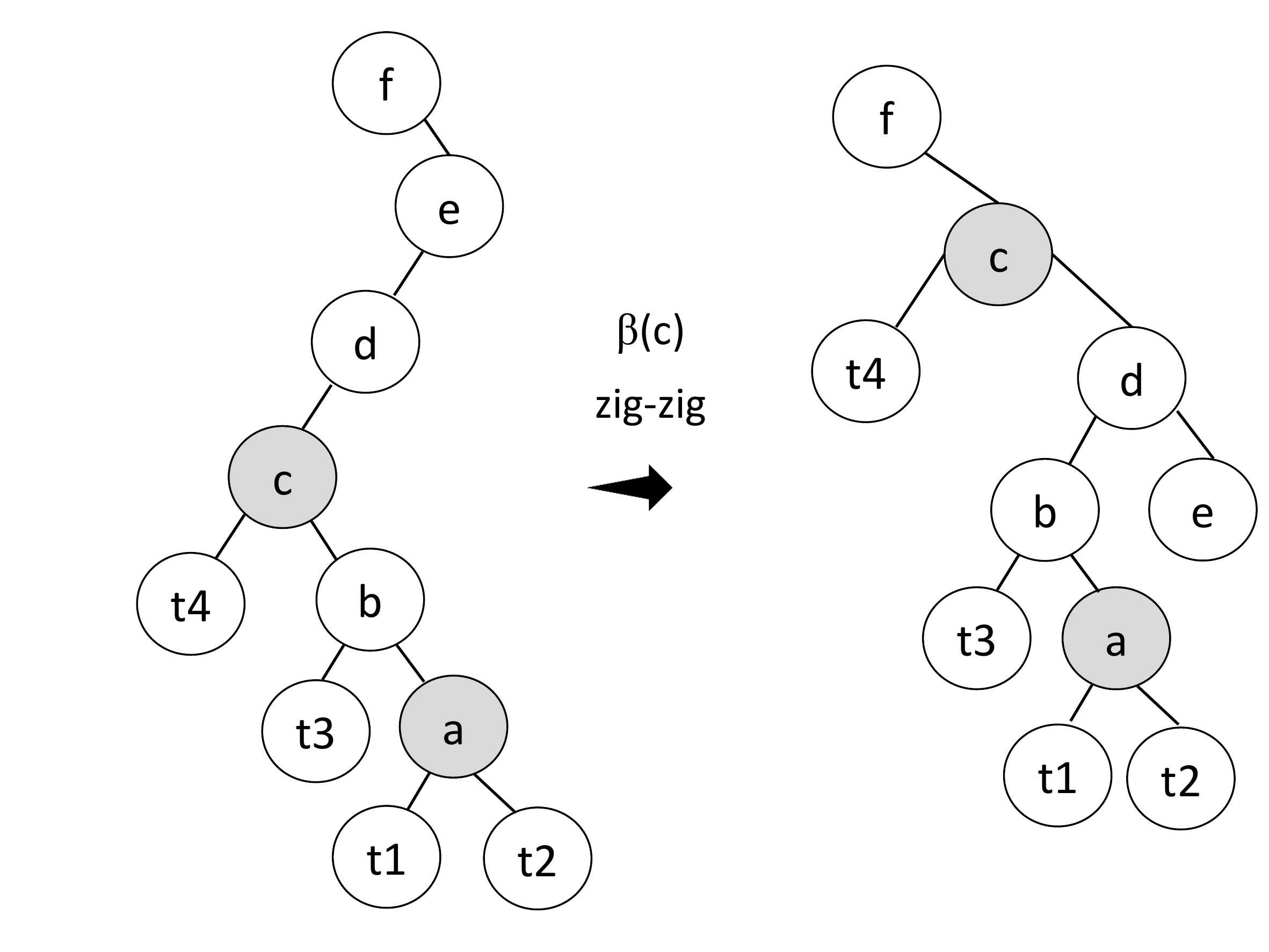}\caption{}
		\label{fig:zigzig3}
	\end{subfigure}	
   \caption{Loop with a grandparent (Case 1.b.i.A): $c=b.p=a.p.p(t), \beta_t(c)$ is a zig-zig, and $b$ is an \textit{abandoned child}. In both cases, $d_{t+1}(c,a)=3$ after the rotation.}
    \label{fig:loopzigzig1}
\end{figure}

 \begin{figure}[!ht]
    \begin{center}
    {\includegraphics[width=1\columnwidth]{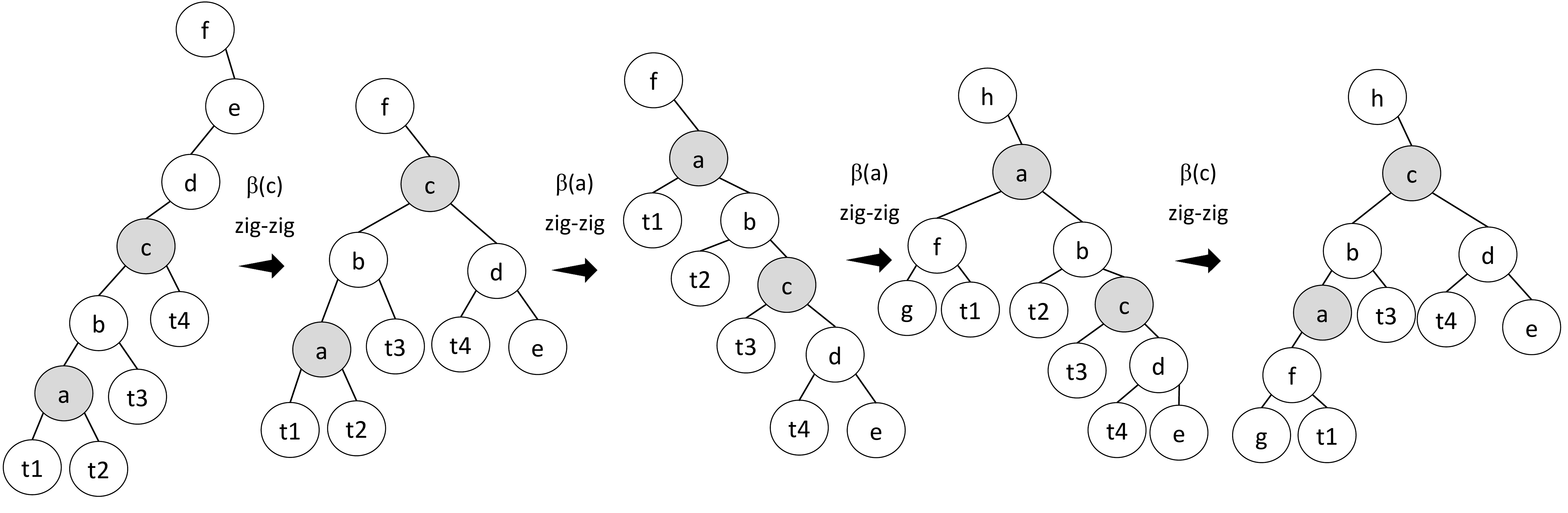}}
    \caption{Loop with a grandparent (Case 1.b.i.B.I): $c=b.p=a.p.p(t), \beta_t(c)$ is a zig-zig and $b$ is a \textit{carried child}; $\beta(a)$ is a zig-zig; $a = c.p.p(t+1)$, so $\beta_{t+1}(a)$ has priority, and it is a zig-zig; $\beta_{t+1}(c)$ is a zig-zig. Even though $a,c$ rotate with each other multiple times, both advance towards their destinations.}
    \label{fig:loopzigzig2.1}
    \end{center}
\end{figure}

 \begin{figure}[!ht]
    \begin{center}
    {\includegraphics[width=0.8\columnwidth]{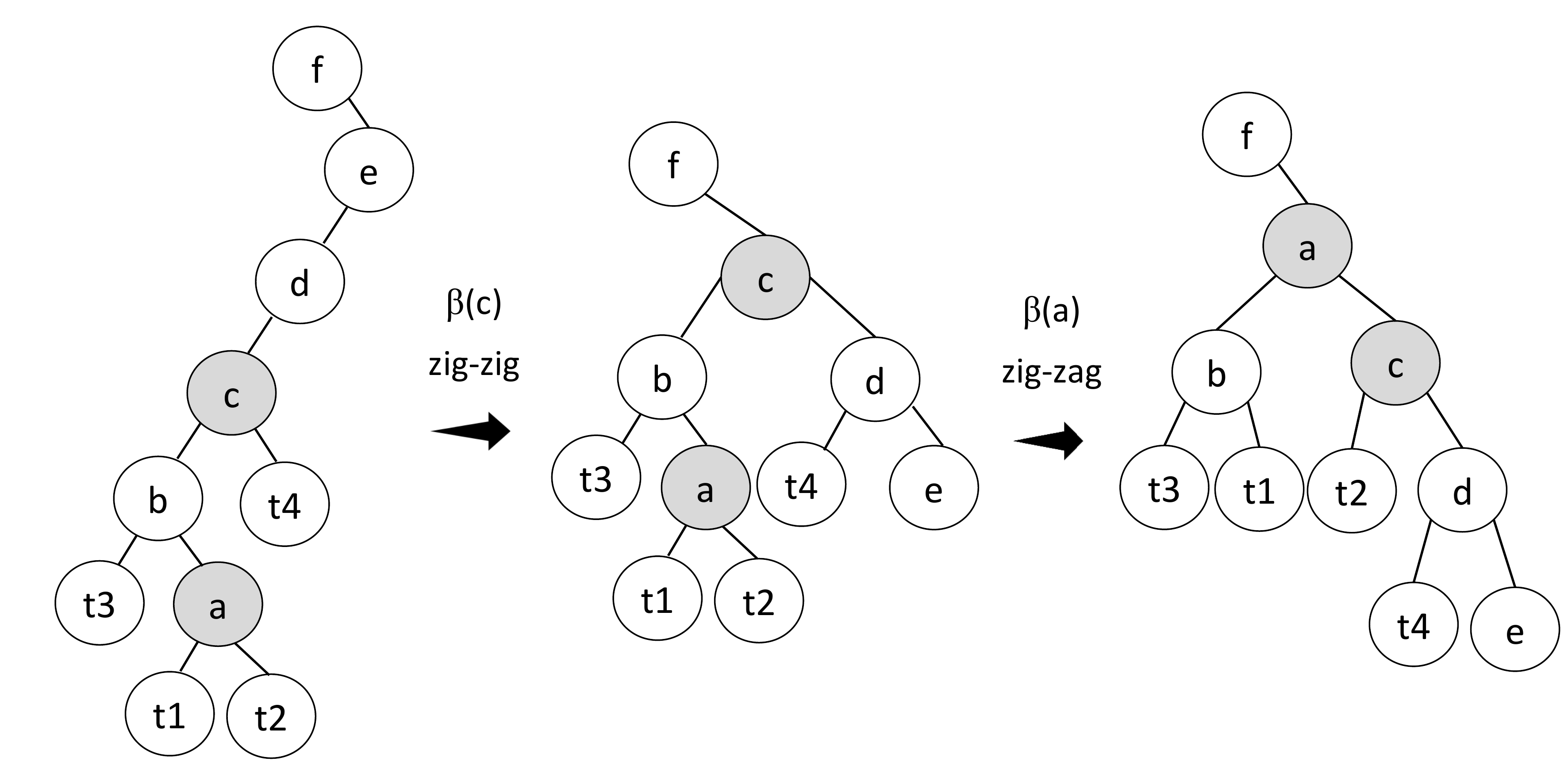}}
    \caption{Loop with a grandparent (Case 1.b.i.B.II): $c=b.p=a.p.p(t), \beta_t(c)$ is a zig-zig and $b$ is a \textit{carried child}; $\beta_t(a)$ is a zig-zag $a = c.p(t+1)$. Thus, any further rotations fit into Case 1(a) (loop with a parent)}
    \label{fig:loopzigzig2.2}
    \end{center}
\end{figure}

\begin{figure}
	\centering
	\begin{subfigure}{0.4\textwidth} 
		\includegraphics[width=\textwidth]{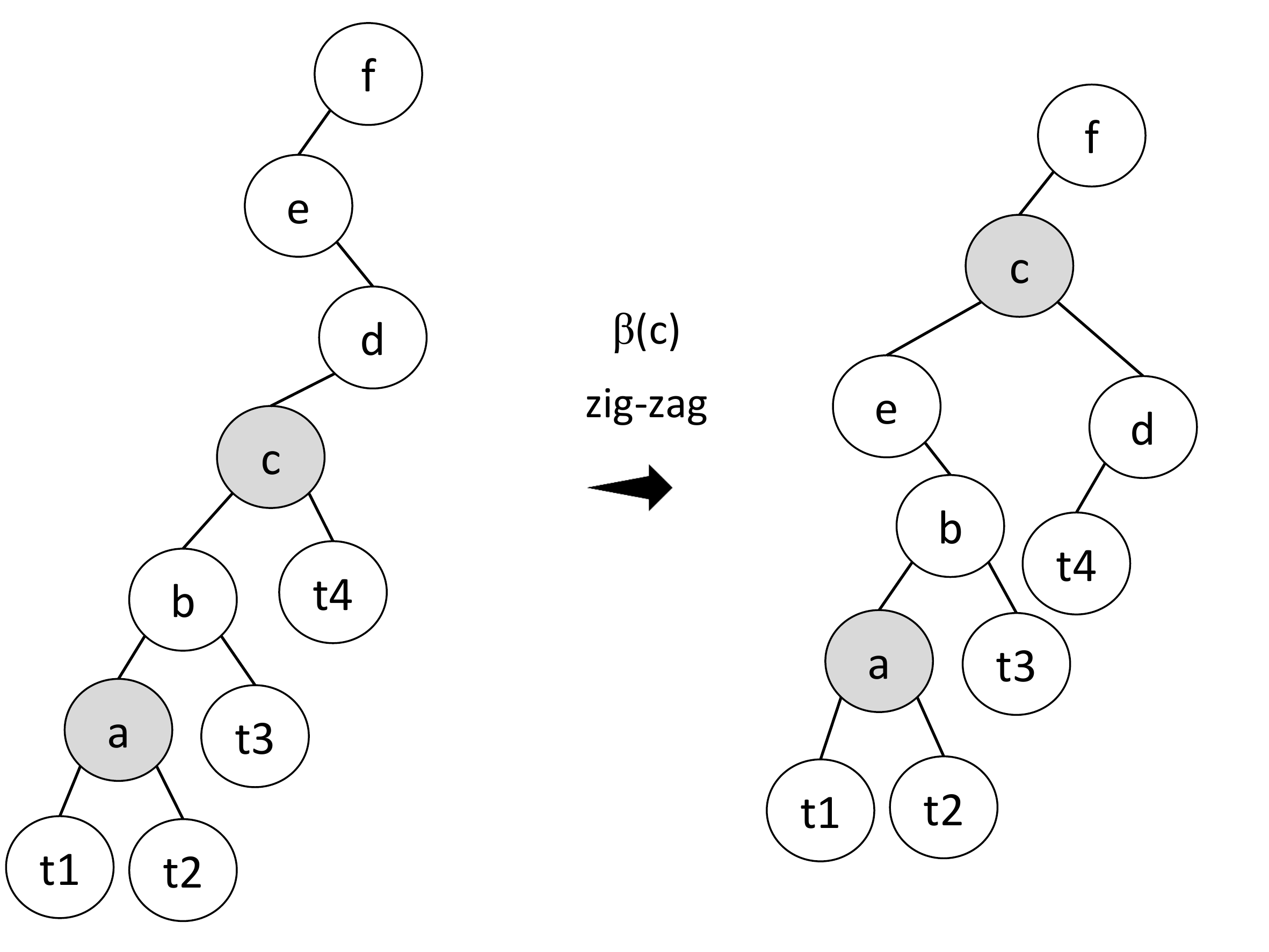}\caption{}
		\label{fig:zigzag1}
	\end{subfigure}
	\vspace{1em} 
	\begin{subfigure}{0.4\textwidth} 
		\includegraphics[width=\textwidth]{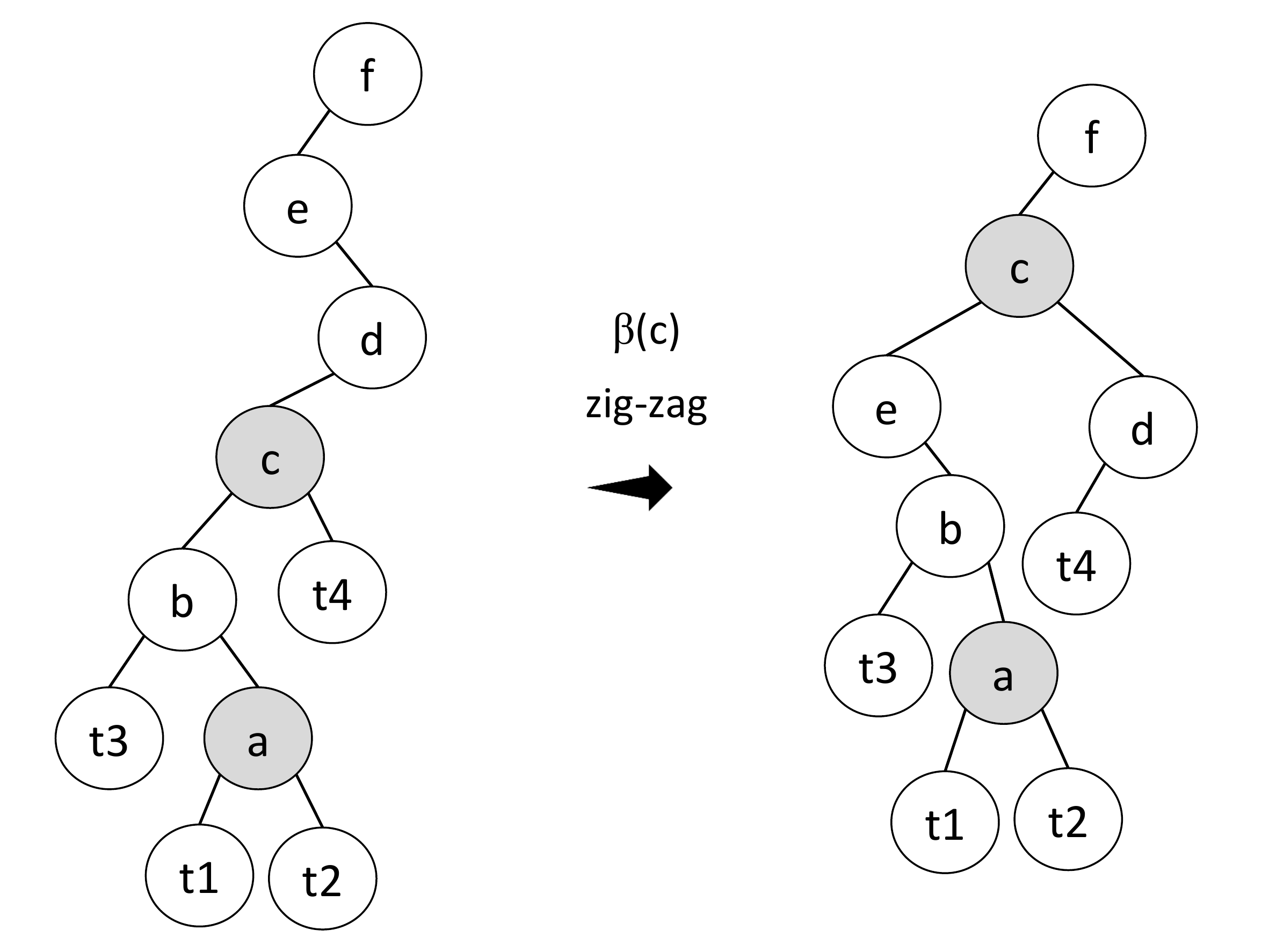}\caption{}
		\label{fig:zigzag2}
	\end{subfigure}	
	\begin{subfigure}{0.4\textwidth} 
		\includegraphics[width=\textwidth]{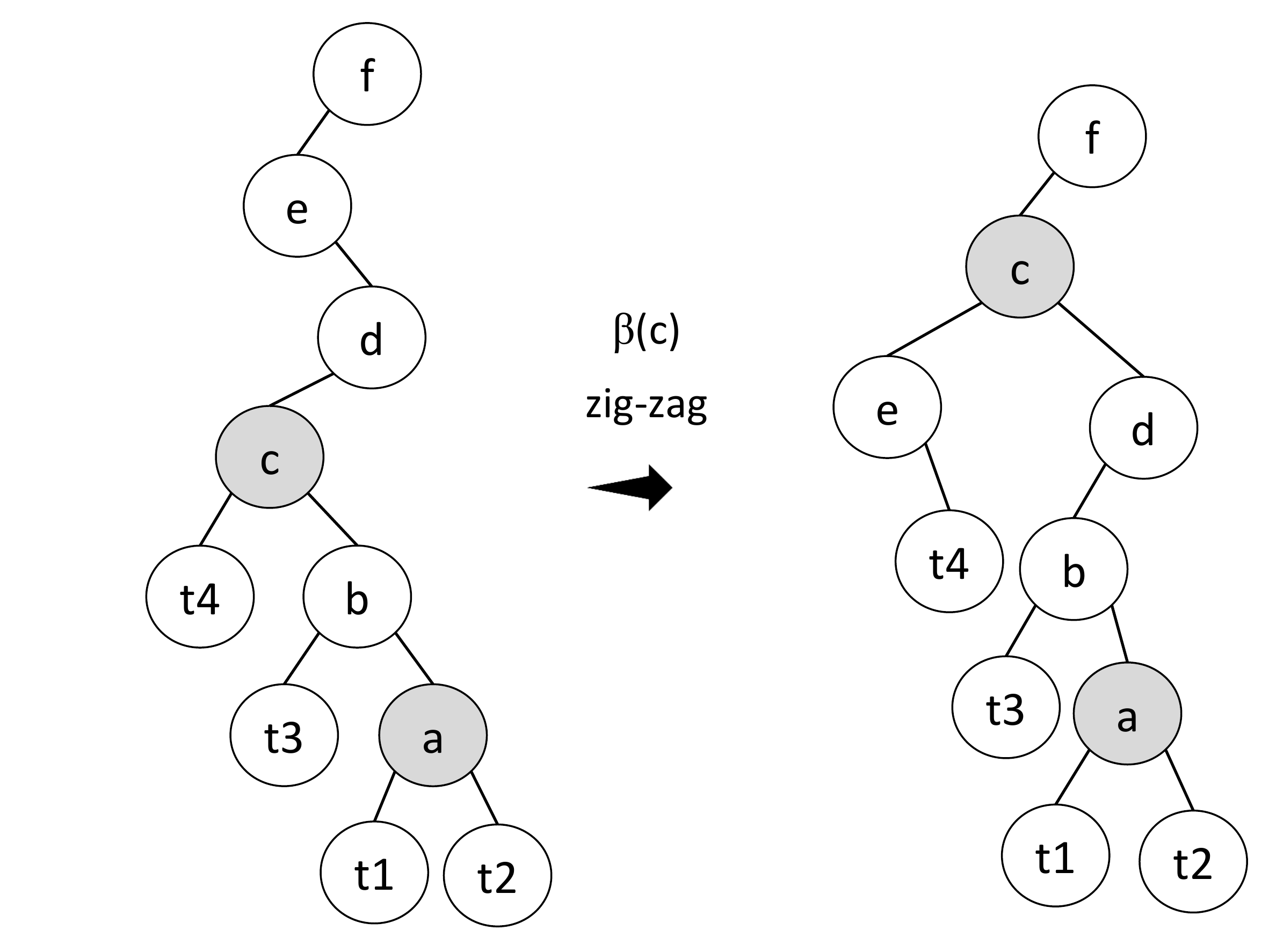}\caption{}
		\label{fig:zigzag3}
	\end{subfigure}	
	\begin{subfigure}{0.4\textwidth} 
		\includegraphics[width=\textwidth]{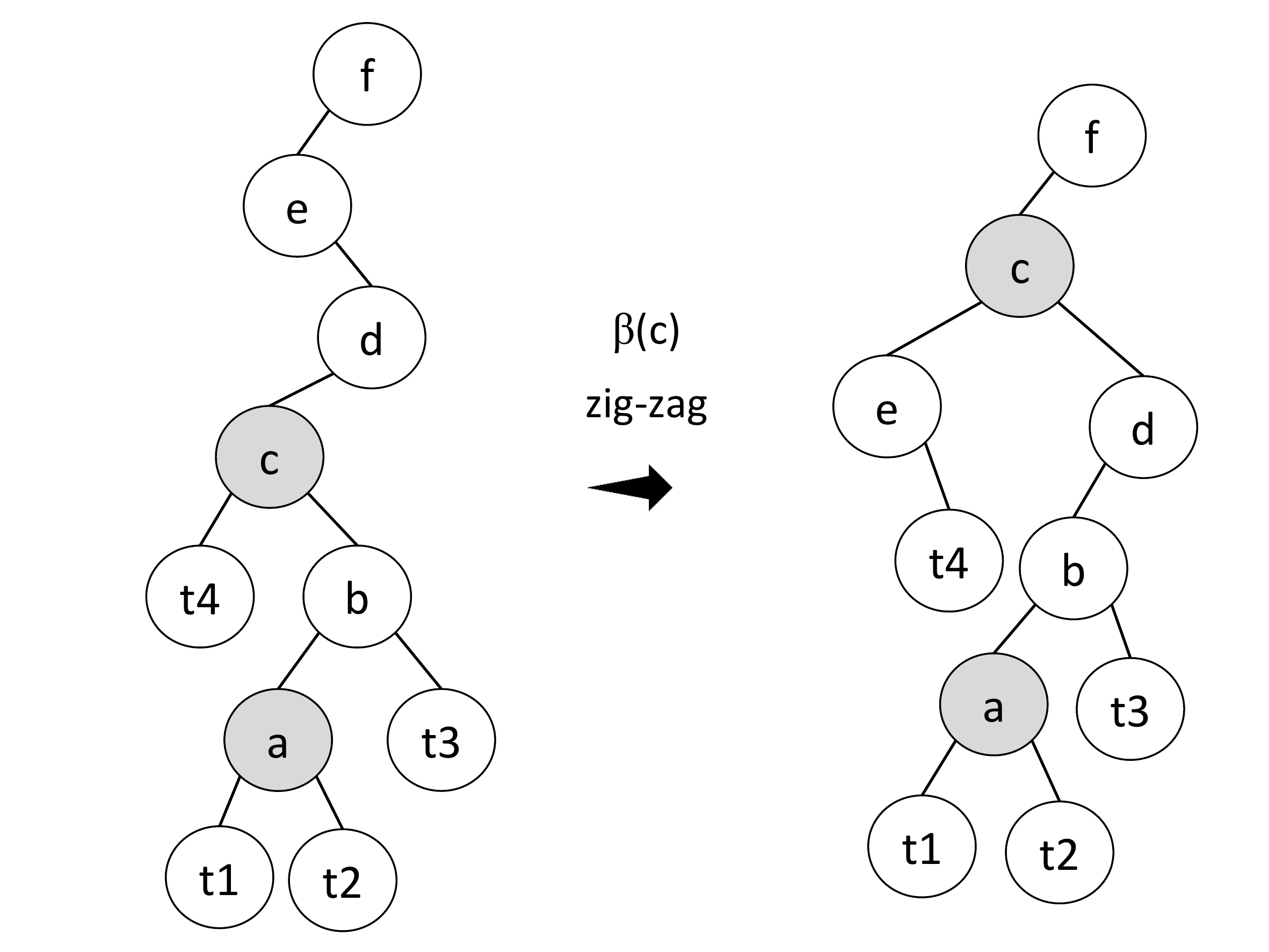}\caption{}
		\label{fig:zigzag4}
	\end{subfigure}
    \caption{Loop with a grandparent (Case 1.b.ii): $c=b.p=a.p.p(t), \beta_t(c)$ is a zig-zag, which results in $d_{t+1}(a,c) = 3$ in all four cases.}
    \label{fig:loopzigzag}
\end{figure}

\begin{figure}
	\begin{subfigure}{0.4\textwidth} 
		\includegraphics[width=\textwidth]{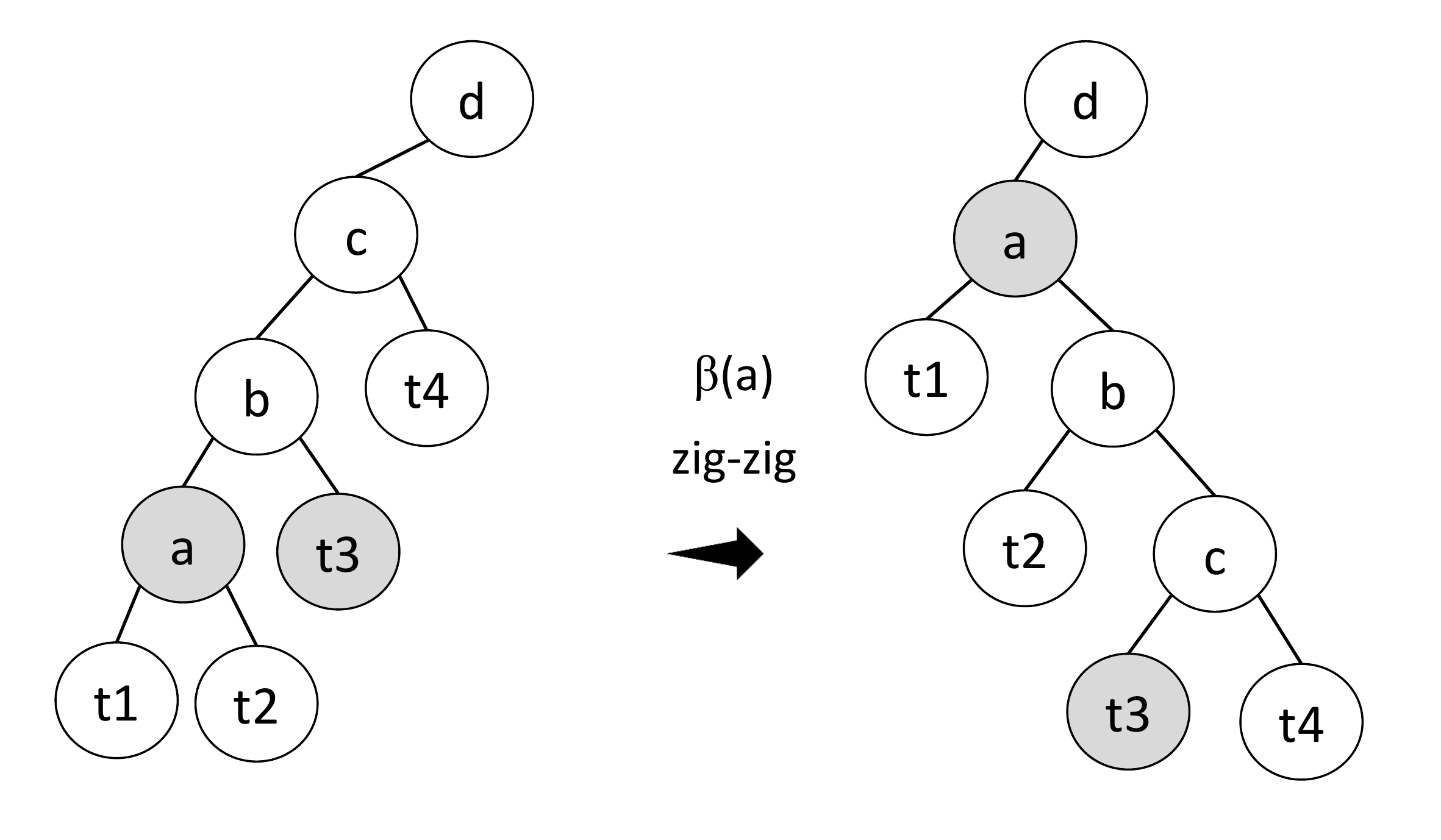}\caption{}
		\label{fig:azigzig}
	\end{subfigure}
	\vspace{1em} 
	\begin{subfigure}{0.4\textwidth} 
		\includegraphics[width=\textwidth]{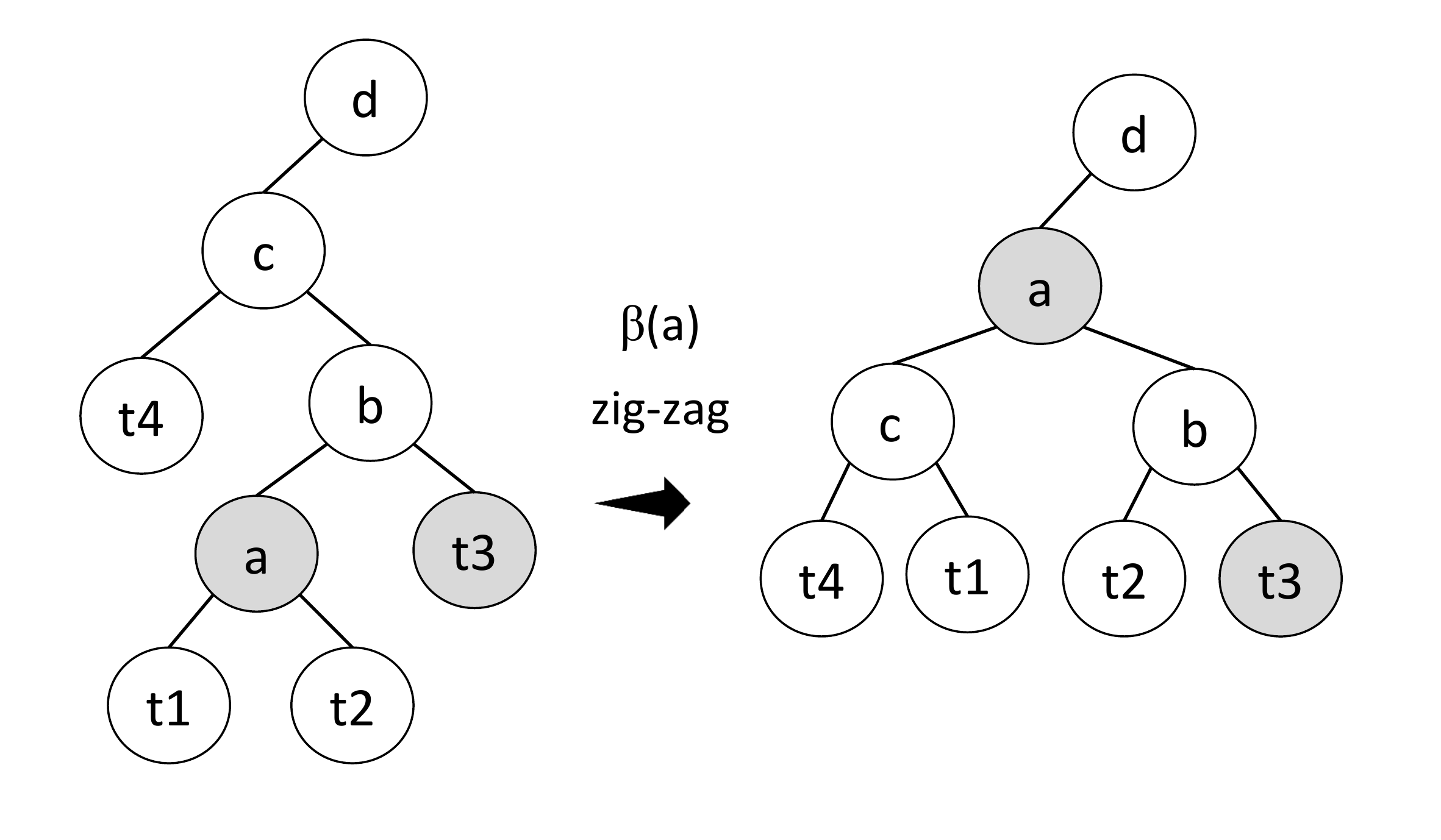}\caption{}
		\label{fig:azigzag}
	\end{subfigure}	
    \caption{Loop with a sibling (Case 1.c): $b=a.p=t3.p$ (a) Case (i): $\beta_t(a)$ is a zig-zig: $d_{t+1}(a,t3) = 3$; (b) Case (ii) $\beta_t(a)$ is a zig-zag: $a = t3.p.p(t+1)$ and $\beta_t(t3)$ fits into Case 1(b) (loop with a grandparent).}
    \label{fig:loopsiblings}
\end{figure}

\end{proof}

\begin{defn}\label{def:inconsistency}
\textbf{Buffer inconsistency: } Consider a SplayNet $\mathcal{T}$, a set of concurrent splay requests $\mathcal{F}$ and two splay request $\mathcal{S}(a,d_a), \mathcal{S}(b,d_b) \in \mathcal{F}$. Consider that in time-slot $\tau$ nodes $a$ and $b$ request a rotation, e. g., $\beta_\tau(a)$ and $\beta_\tau(b)$. We say that there is a buffer inconsistency if there are two nodes $i,j$ that inserted these requests in reverse order into their buffers, i.e., $\mathcal{B}_i=[\ldots,\beta(a),\ldots,\beta(b), \ldots]$ and $\mathcal{B}_j=[\ldots,\beta(b),\ldots,\beta(a), \ldots]$ in time-slot some $\tau'>\tau$. 
\end{defn}

\begin{lemma}\label{claim:noDeadlocks}
\textit{Priority routine} prevents deadlocks from occurring in the network. 
\end{lemma}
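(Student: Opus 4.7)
The plan is to proceed by contradiction: assume that for some set of concurrent rotation requests, there exists a deadlock, i.e., a collection $\{\beta(u_1),\ldots,\beta(u_k)\}$ in which each $\beta(u_i)$ is indefinitely blocked --- either waiting for a lock held by some $\beta(u_j)$ in the set, or waiting because some other request sits ahead of it at the head of a relevant buffer (according to the priority routine, a rotation fires only when its entry is $\mathcal{B}[0]$ at every participating node). The waiting relation induces a directed graph on the stalled requests, and a deadlock necessarily contains a directed cycle.

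The key structural claim I would establish first is that the priority routine induces a consistent strict total order $\prec$ on any two rotation requests whose lock sets may intersect. For two requests in different super-rounds or different rounds this is immediate from criteria (1)--(2), which are globally defined. For two requests $\beta(u), \beta(u')$ in the same round and super-round, I would argue that any node $n$ that can host both in its buffer must lie on the common ancestor path of $u$ and $u'$ in $\mathcal{T}$; along such a path, the hierarchy level of $u$ relative to $n$ and of $u'$ relative to $n$ is preserved as $n$ varies, so criterion (3) always ranks them the same way. Ties are broken by ascending ID, a globally consistent tiebreaker, giving a well-defined $\prec$ on conflicting requests. I would additionally invoke the buffer-change protocol of Chapter~\ref{chap:design} to argue that the order is preserved under buffer updates triggered by other (non-conflicting) rotations, ruling out the buffer inconsistency of Definition~\ref{def:inconsistency} for pairs of requests that actually compete.

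With $\prec$ in hand, pick $\beta(u^{*})$ to be the $\prec$-minimum (highest priority) request in the hypothetical waiting cycle. By hypothesis, $\beta(u^{*})$ is blocked by some $\beta(u')$ also in the cycle, so $\beta(u')$ either currently holds a lock on a node $n$ that $\beta(u^{*})$ needs, or is positioned ahead of $\beta(u^{*})$ in $\mathcal{B}_n$. But both requests are present in $\mathcal{B}_n$, and $\beta(u^{*})\prec\beta(u')$, so the priority routine places $\beta(u^{*})$ strictly before $\beta(u')$ in $\mathcal{B}_n$; therefore $n$ could not have granted the lock to $\beta(u')$ first, a contradiction. I would additionally cite Lemma~\ref{claim:noLoops} to dispose of the degenerate two-party case, where the cycle consists of two nodes perpetually exchanging positions rather than holding locks outright.

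The step I expect to be the principal obstacle is the buffer-consistency claim in the second paragraph: proving that every node which could simultaneously hold the two competing requests assigns them the same hierarchical ranking, even after concurrent rotations elsewhere in the tree relocate nodes and trigger the buffer exchanges described in the design chapter. Once that invariant is nailed down, the deadlock-free conclusion follows directly from the existence of a minimum element under the strict total order $\prec$.
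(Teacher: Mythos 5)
Your proposal is correct in substance and uses the same two-part decomposition as the paper --- rule out inconsistent buffer orderings (Definition~\ref{def:inconsistency}), then rule out a circular wait --- but the two halves are argued differently. For buffer consistency, the paper runs a proof by contradiction: it assumes two nodes $i,j$ order $\beta(a),\beta(b)$ oppositely, enumerates what the priority routine's hierarchical criterion would force about the tree relationship of $a$ and $b$ from each viewpoint (siblings, cousins, or descendant at $i$; ancestor at $j$), and shows the two views cannot coexist in a tree. You instead prove the positive invariance statement directly: any node that can hold both requests is a common ancestor within three hops, and the depth difference between $u$ and $u'$ (hence their relative hierarchical rank) is the same from every such vantage point; this is cleaner and makes explicit \emph{why} the paper's case analysis must close. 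For the circular wait, the paper maps the wait cycle onto a cycle in the ancestor relation and invokes tree acyclicity (via Lemma~\ref{claim:noLoops}), exploiting the fact that a rotation only ever waits on nodes above it; your argument extracts the $\prec$-minimum request from the cycle and shows it cannot be blocked, which is the standard resource-ordering argument and generalizes beyond the tree structure, at the cost of needing the extra observation that a lower-priority request cannot be sitting ahead of the minimum in any shared buffer. The obstacle you flag --- that the total order must survive buffer exchanges as concurrent rotations relocate nodes between rounds --- is real, and it is worth noting that the paper's proof does not address it either: its inconsistency argument is a static snapshot of a single round, so your version is, if anything, the more honest about where the remaining work lies.
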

\begin{proof}
A deadlock may occur in the network if there is a pair of inconsistent buffers (Definition \ref{def:inconsistency}) or a circular wait. We analyze these two cases below.
\begin{enumerate}
    \item \textbf{Inconsistent buffers}: Consider four nodes $i,j,a,b$, and two inconsistent buffers: $\mathcal{B}_i=[\ldots,\beta(a),\ldots,\beta(b), \ldots]$ and $\mathcal{B}_j=[\ldots,\beta(b),\ldots,\beta(a), \ldots]$. W.l.g., assume $ID(a) < ID(b)$. Since $\beta(a)<\beta(b)$, applying priority routine on $\mathcal{B}_i$, there are three possibilities:
    \begin{enumerate}
        \item\label{case1.1} $a,b$ are siblings ($a.p=b.p$);
        \item\label{case1.2} $a,b$ are cousins ($a.p.p=b.p.p$);
        \item\label{case1.3} $b$ is descendant of $a$ ($a=b.p \lor a.p=b.p.p \lor a=b.p.p \lor a = b.p.p.p$);
    \end{enumerate}    
    Since $\beta(b)<\beta(a)$ in $\mathcal{B}_j$, we know that $b$ is ancestor of $a$, otherwise, priority routine would reverse the order. Combining the views at $i$ and $j$, we realize that case (\ref{case1.1}) is not possible, since $a.p\neq b.p$, case (\ref{case1.2}) is not possible since $a.p.p\neq b.p.p$, and case (\ref{case1.3}) is not possible since $b$ cannot be descendant and ancestor of $a$ at the same time, since such a relationship would create a cycle in the SplayNet, which is a tree. Therefore, there are no inconsistent buffers when priority routine is used to order the rotation requests.
    \item \textbf{Circular wait}: Assume there are $i\geq3$ rotation requests in a circular wait among nodes $a_1 \to a_2 \to \dots \to a_i \to a_1$ in round $t \in \mathcal{R}$. This implies that there is a cycle in the SplayNet, which is a contradiction, since concurrent rotations do not create cycles, even if they belong to different rounds (Lemma~\ref{claim:noLoops}). 
\end{enumerate}
\end{proof}

\section{Round Length}\label{subsec:round}

The communication routines for a rotation $\beta(u)$, described in Section~\ref{subsec:distributed}, must send messages and receive acknowledgements from nodes up to three hops away from $u$. If there were no concurrent rotations, this would take a constant number of time-slots in the synchronous communication model. However, due to concurrent rotation requests, the communication routines of $\beta(u)$ are not necessarily consecutive. Lemma \ref{claim:round} provides an upper bound on the number of time-slots a rotation takes, considering concurrency.

\begin{lemma}\label{claim:round}
Consider a set of concurrent splay requests $\mathcal{F}, |\mathcal{F}|=m$ in super-round $\mathcal{R}$. The length of a round $|t|, t\in \mathcal{R}$, defined as the number of time-slots between a node requesting a rotation and completing it, is $|t|=O(\log{m})$. 
\end{lemma}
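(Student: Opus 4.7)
The plan is to bound the waiting time of a rotation $\beta(u)$ by charging it against the priority structure of competing rotations. First, I would reduce the problem to one of pure waiting: without any concurrent conflicts, the communication routine of Section \ref{subsec:distributed} completes in at most $9$ time-slots, so any super-constant delay must come from higher-priority rotations that lock nodes $\beta(u)$ also needs.

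Next, I would characterize the conflict structure. A rotation $\beta(u')$ can delay $\beta(u)$ only if $u'$ lies within a constant-hop neighborhood of $u$, since the locking set of any rotation is contained in the $3$-hop neighborhood of its level-one node. By the priority routine (Definition \ref{def:priority}), $\beta(u')$ blocks $\beta(u)$ only if $\beta(u')$ precedes $\beta(u)$ in the buffer of some node that both rotations must lock. Combining this observation with Lemmas \ref{claim:noLoops} and \ref{claim:noDeadlocks}, the waits-for relation forms a well-founded partial order on the at-most-$m$ pending requests of the super-round, so $\beta(u)$ eventually reaches the head of every relevant buffer. This reduces the lemma to bounding the length of the longest waits-for chain ending at $\beta(u)$ in the directed conflict graph $G$ built on the pending rotations.

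The main obstacle, and the core of the argument, is proving that the longest such chain has length $O(\log m)$. My intended approach is a charging argument along the chain: each predecessor of $\beta(u)$ in the chain either permanently evicts a higher-priority competitor from some buffer relevant to $\beta(u)$, or promotes $\beta(u)$ one step up in the four-tier hierarchical priority. Since each buffer holds at most $|\mathcal{B}|=15$ entries and since the chain draws its elements from the $m$ pending requests, a doubling-style count in the spirit of the classical splay-tree potential analysis should yield the $\log m$ bound. The delicate subcase is when rotations by ancestors of $u$ reshape $u$'s neighborhood mid-chain and therefore introduce new conflicts; I plan to handle this by invoking Lemma \ref{claim:noLoops} to rule out a single conflict reappearing unboundedly often, so that each neighborhood change itself corresponds to exactly one chain entry of $O(1)$ time-slot cost. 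Once the chain length is pinned down, each of its $O(\log m)$ entries contributes $O(1)$ time-slots, and summing along the chain yields $|t|=O(\log m)$.
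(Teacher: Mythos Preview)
Your initial reduction and your local characterization of conflicts match the paper's argument closely: both start from the observation that an uncontended rotation finishes in $O(1)$ time-slots, and that any delay must come from the constant-size set of higher-priority requests sitting in the buffers of the (at most four) nodes that $\beta(u)$ must lock. Where you diverge is in how you extract the $\log m$ bound from this local structure.

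The paper does not use a charging or potential argument. It simply observes that, because the priority routine orders requests by round sequence number first, the number of requests that can precede $\beta(u)$ in the relevant buffers is bounded by a constant $k''$; each of those $k''$ requests in turn waits for at most $k''$ others, and so on. This yields a dependency tree rooted at $\beta(u)$ with out-degree $k''$ whose nodes are distinct pending rotations. Since there are at most $m$ pending rotations in the super-round, the tree has at most $m$ nodes and therefore height $\log_{k''} m$, which gives the bound directly.

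Your proposed route, by contrast, has a gap precisely at the step you flag as the core. As you describe the charging, each predecessor along the waits-for chain either evicts one competitor from a buffer of size $|\mathcal{B}|=15$ or advances $\beta(u)$ one level in a four-tier hierarchy; both of these are additive effects and would naturally bound the chain length by a constant or by $O(m)$, not by $O(\log m)$. The phrase ``doubling-style count in the spirit of the classical splay-tree potential analysis'' is not connected to any quantity that actually doubles along the chain, and the splay-tree potential concerns subtree sizes, which play no role in the locking mechanism. The concern you raise about ancestors reshaping the neighborhood mid-chain is legitimate, but the paper's bounded-degree dependency-tree view sidesteps it: once you count distinct pending rotations rather than buffer positions, any reshaping merely rearranges edges in a tree that still has at most $m$ nodes and constant out-degree.
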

\begin{proof}
Consider a rotation request $\beta(u)$, issued by the source node $u$ of some splay request $\mathcal{S}(u,d_u)\in \mathcal{F}$ in time-slot $\tau_1$ in round $t \in \mathcal{R}$.
We know that no requests are lost due to failures and no infinite loops or deadlocks occur (Claims \ref{claim:noLoops} and \ref{claim:noDeadlocks}). As discussed above, the communication routines of a rotation take a constant number of time-slots ($k=$\ref{item:lastStep}). However, each of these communication routines, performed by a node $w$ may be delayed due to other rotation requests in the buffer $\mathcal{B}_w$ with higher priority than $\beta(u)$. Since the concurrency lock (Section \ref{subsec:lock}) of a rotation request affects a constant number of nodes ($k'$) and \textit{priority routine} gives priority to rotation requests by non-decreasing round sequence number, the total number of requests with higher priority than $\beta(u)$ is constant ($k''$). Each of these $k''$ requests might have to wait for another $k''$ requests with higher priority than their own, and so on, resulting in a dependency tree of degree $k''$ and height $\log_{k''}{m}$.  Consequently, from the time-slot at which source node $u$ sends the first rotation request message to its parent, it takes at most $k \log_{k''}{m} = O(\log{m})$ time-slots for the rotation to complete. Afterwards, $u$ may start the next round, with sequence number $t+1 \in \mathcal{R}$.

For example, for the zig-zig rotation request $\beta(u)$ in Figure~\ref{fig:rotations}, $k'=k''=7$. Let's add the following notation: $t_i$ is the root of sub-tree $T_i$, $x=w.p$, $y=w.p.p$, $t_5=x.r$, $t_6=y.r$. The nodes that need to be locked for $\beta(u)$ are: $t_1, t_2, u, t_3, v, w, x$. $\beta(u)$ has maximum priority at buffers of $t_1, t_2$, and $t_3$. The rest of the buffers are ordered as follows: $\mathcal{B}_v=[w,v,t_4,u]$, $\mathcal{B}_w=[x,w,t_5,v,t_4,u]$, and $\mathcal{B}_x=[y,x,t_6,w, t_5, v, t_4,u]$. Therefore, $u$ will wait for $7$ rotations to complete before it acquires all the necessary locks for $\beta(u)$. Each of them might have to wait for another $k''$ requests, and so on.
\end{proof}

\section{Amortized analysis}
\label{subsec:amortized}

In the previous sections, we allowed rounds to overlap, in order to increase concurrency. For the formal analysis, we assume that rounds are (globally) synchronized, i.e., every node performs at most one rotation per round, until reaching its objective. We measure the total work of SplayNet in terms of maximum number of rounds each source-destination pair takes to reach the objective of the splay request. The length of the super round is then the maximum number of rounds needed for all splay requests to reach their objectives concurrently. We compute the amortized cost in terms of rounds, and in the end multiply it by the maximum round length in time-slots, using the upper bound from Lemma~\ref{claim:round}.
        
    Consider a super-round $\mathcal{R}={t_0,..., t_{|\mathcal{R}|}}$, a sequence of SplayNet instances $\mathcal{T} = {\mathcal{T}_{0},..., \mathcal{T}_{{|\mathcal{R}|}}}$ on $n$ nodes,  and a set of concurrent splay requests $\mathcal{F} \in \mathcal{T}$, $|\mathcal{F}|=m$. We analyze the amortized cost using the potential method from \cite{Tarjan:1985p99}.  
    
    \begin{defn}
    Consider that each node $u$ in the SplayNet instance $\mathcal{T}_{i}, t_i \in \mathcal{R}$ is assigned a size $s_i(u)$, which represents the number of nodes in the subtree of $u$ including $u$. We define the \textbf{rank} $r_i(u)$ of a node $u$ as the logarithm in base 2 of the size of $u$, i. e., $r_i(u) = \log_2(s_i(u))$. We define the total SplayNet rank $r(\mathcal{T}_{i})$ as the sum of the ranks of all nodes in $\mathcal{T}_{i}$. Note that the maximum size and rank of a node is $n$ and $\log_2{n}$, respectively. \label{def:totalRank}
    \end{defn} 
     
    We employ a potential function argument and we use the abstraction of ``cyber-dollars'' to pay for the work for splaying a source node $s$ in $\mathcal{T}_{i}$, assuming that one rotation costs one cyber-dollar. Thus, a zig costs one cyber-dollar, while zig-zig and zig-zag cost two cyber-dollars. 

    \begin{defn}\label{def:costSplay}
    We define \textit{cost} $p_j$, in cyber-dollars, of a splay request $\mathcal{S}_j(s,d_s) \in \mathcal{F}$, as the number of rotations that nodes $s$ and $d_s$ need to perform to reach their objective ($s = d_s.p(\tau_j)$, or, $d_{t_k}(s,d_s)=1$ in some time-slot $\tau_k \in \mathcal{R}$. Splaying $\mathcal{S}_j(s,d_d)$ consists of at most $d_{t_i}(s,d_s)/2$ rotations of type zig-zig or zig-zag, plus two zig rotations, if $d_i(s,\alpha(s,d_s))$ and $d_i(d_s,\alpha(s,d_s))$ are odd, for some $t_i\leq t_k \in \mathcal{R}$.
    \begin{equation}        
    p_j \leq \frac{1}{2} \max_{\tau_i \in \mathcal{R}}{d_i(s,d_s)} + 2, \forall \mathcal{S}_j(s,d_s) \in \mathcal{F}, \label{eq:costSplay} 
    \end{equation}
    
    \noindent where $d_i(s,d_s)$ is the hop distance between $s$ and $d_s$ in time-slot $\tau_i$.        
    \end{defn}

    \begin{defn}\label{def:maxSplayDistance}
    We define the maximum splay distance $\mathcal{D}$ in super-round $\mathcal{R}$ as: 
    \begin{equation}
    \mathcal{D} = \max_{\mathcal{S}_j(s,d) \in \mathcal{F}}{p_j} \label{def:splayDistance}
    \end{equation}
    \noindent where $p_j$ is the cost of splay $\mathcal{S}_j \in \mathcal{F}$ (Definition(\ref{def:costSplay}). The maximum splay distance $\mathcal{D}$ is also the upper bound on the number of rounds in a super-round $\mathcal{R}$.
    \end{defn}

The amortized analysis is the average performance of each operation in the worst case~\cite{Cormen:2001}. In SplayNet scenario, the amortized cost can be described as the average cost per operation for a given sequence $\mathcal{F}$ of communication requests. The potential method defines a function that maps a data structure onto a real-valued, non-negative "potential". The potential stored in the data structure may be used to pay for future operations. In the potential method, the amortized cost $\hat{c}_i$ of an operation $i$ is the actual cost $c_i$ plus the increase in potential $\delta$ due to the operation, where $\delta = \phi(D_{i}) - \phi(D_{i-1})$. This gives us:

\begin{equation}\label{eq:potential_individual}
\hat{c_{i}} = c_{i} + \phi(D_{i}) - \phi(D_{i-1})
\end{equation}

This means that the amortized cost and potential function must be defined in order to always maintain such equivalence. This equivalence implies that if the actual cost of and operation is less than the amortized cost, the potential is increased, and if the cost of an operation is greater than its amortized cost, the potential is decreased. By equation \ref{eq:potential_individual}, we can derive the total amortized cost given the actual costs: 
\begin{equation}
\sum^{n}_{i=1}\hat{c_{i}} = \sum^{n}_{i=1}c_{i} + \phi(D_{n}) - \phi(D_{0})
\end{equation}

After performing a splay, one of three cases applies: (1) If the payment is equal to the splaying work, then we use it all to pay for the splaying; (2) If the payment is greater than the splaying work, we deposit the excess in the accounts of several nodes, i. e., we increase the potential; (3) If the payment is less than the splaying work, we make withdrawals from the accounts of several nodes to cover the deficiency, i. e., we decrease the potential. Moreover, an \textit{invariant} is maintained: before and after a splaying, each node $u$ of $\mathcal{T}_{i}$ has $r(u)$ cyber-dollars. To preserve this invariant after a splaying, we must pay the splay work plus the total change in $r(\mathcal{T}_{i})$.

    \begin{lemma}\label{lemma:variationperround} Consider a SplayNet instance $\mathcal{T}_{t_i}$ with $N$ nodes in round $t_i \in \mathcal{R}$, a set of concurrent splay requests $\mathcal{F} \in \mathcal{T}$, $|\mathcal{F}|=m$, and $\leq m$ concurrent rotations in round $t_i$, each belonging to one of the $m$ concurrent splay requests $\mathcal{S}_j \in \mathcal{F}$ in super-round $\mathcal{R}$.\footnote{Note that there can occur $<m$ rotations in round $t_i$, in case some of the splay requests in $\mathcal{F}$ have been completed before round $t_i$.} We denote the rank of a node $u$ before and after a rotation as $r(u)$ and $r'(u)$, respectively.
    Let $\delta_i^j$ be the total rank variation in $r(\mathcal{T}_{t_i})$ caused by all the rotations $\beta_{t_i}(u) \mid u = (s_j \oplus d_j) \in \mathcal{S}_j$ in round $t_i$. We have that:
    
    \begin{itemize}
        \item $\delta_i \leq 3(r'(u)-r(u)) - 2$, if the rotation is a zig-zig or zig-zag;
        \item $\delta_i \leq 3(r'(u)-r(u))$, if the rotation is a zig.
    \end{itemize}
     \end{lemma}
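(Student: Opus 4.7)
The plan is to follow the classical Sleator--Tarjan Access Lemma argument applied per rotation, and then argue that for concurrent rotations the contributions sum up without interference. Since each rotation is a local transformation whose lock set (as defined in Section~\ref{subsec:lock}) is disjoint from the lock sets of any other concurrent rotation in the same round, the only nodes whose subtree sizes (and hence ranks) change due to rotation $\beta_{t_i}(u)$ are $u$, $v=u.p$, and, in the zig-zig/zig-zag cases, $w=v.p$. Therefore it suffices to bound the rank variation of a single rotation and then add up independently across the $\leq m$ concurrent rotations of round $t_i$.

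The key analytic ingredient is log-concavity: if $a,b>0$ satisfy $a+b\le c$, then $\log_2 a+\log_2 b\le 2\log_2 c-2$. This is the standard inequality used by Sleator and Tarjan, and it will be the only nontrivial calculation in the proof. I will state and use it without reproducing its short proof.

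I would then proceed by case analysis following the rotation taxonomy of Definition~\ref{def:rotation}. For the \textbf{zig} case, only $u$ and $v$ change rank; observing that $s'(u)=s(v)$ (so $r'(u)=r(v)$) and that $r'(v)\le r'(u)$ and $r(u)\le r(v)$, the terms telescope to give $\delta\le r'(u)-r(u)\le 3(r'(u)-r(u))$, since ranks are non-negative and $r'(u)\ge r(u)$. For the \textbf{zig-zig} case, I will verify from the picture (Figure~\ref{fig:zigzig}) that $s'(u)=s(w)$ and that the disjoint subtrees of $u$-before and $w$-after both lie inside $u$-after, yielding $s(u)+s'(w)\le s'(u)$, whence $r(u)+r'(w)\le 2r'(u)-2$ by log-concavity. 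Combining with $r'(v)\le r'(u)$ and $r(v)\ge r(u)$ then gives
\begin{equation}
\delta \;=\; r'(v)+r'(w)-r(u)-r(v) \;\le\; r'(u)+(2r'(u)-r(u)-2)-2r(u) \;=\; 3(r'(u)-r(u))-2.
\end{equation}
The \textbf{zig-zag} case is analogous, using the geometrically dual inequality $s'(v)+s'(w)\le s'(u)$ instead.

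The main obstacle, and the only place where the concurrent setting bites, is justifying that the three-bullet per-rotation bound legitimately adds up to give a bound on $\delta_i^j$ (which is the aggregate variation caused by all rotations of splay $\mathcal S_j$ in round $t_i$, i.e.~possibly one rotation sourced at $s_j$ and one at $d_j$). Here I will invoke Section~\ref{subsec:lock}: two concurrent rotations in the same round cannot share the nodes $\{u,v,w\}$ in their lock sets (otherwise one would be waiting on the other, contradicting concurrency and the priority routine of Lemma~\ref{claim:noDeadlocks}), so the triples of nodes whose ranks change are disjoint across concurrent rotations. Consequently the rank perturbations are independent and the per-rotation bounds add, giving the claimed inequalities in the statement for each summand.
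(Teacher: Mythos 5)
Your proof is correct and follows essentially the same route as the paper's: the standard Sleator--Tarjan potential argument, case-split by rotation type, with the log-concavity inequality $\log a+\log b\le 2\log c-2$ applied to $s(u)+s'(w)\le s'(u)$ (zig-zig) and $s'(v)+s'(w)\le s'(u)$ (zig-zag), and the telescoping bound for the zig. The only difference is that you additionally justify additivity across concurrent rotations via disjointness of the lock sets, a point the paper's proof leaves implicit.
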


     \begin{proof}
        Figure \ref{fig:rotations} presents the three rotation types used in SplayNet. Let's consider $\delta_i$ the the total rank variation in $r(\mathcal{T}_{t_i})$ in round $t_i$ caused by each rotation. Consider the following inequality:
        \begin{equation}\label{ineq:log}
            \log a + \log b \leq 2\log c - 2, \text{if a>0, b> 0 and c>a+b}
        \end{equation}
        \begin{itemize}
            \item \textbf{zig-zig:} Let's consider the nodes in Figure \ref{fig:rotations}. After a zig-zig rotation, only the rank of nodes $u$, $v$ and $w$ change. In addition, $r'(u) = r(w), r'(v)\leq r'(u)$, and $r(v)\geq r(u)$. Thus,\\
            \(\delta_i = r'(w) + r'(v) + r'(u) - r(w) - r(v) - r(u) \\
            \tab \leq r'(v) + r'(w) - r(u) - r(v)\\
            \tab \leq r'(u) + r'(w) - 2r(u).
            \) \\
            Since $s(u) + s'(w) \leq s'(u)$, by inequality \ref{ineq:log}, $r(u)+r'(w)\leq 2r'(u)-2$, and $r'(w)\leq 2r'(u)-r(u)-2$. Thus,\\
            \(\delta_i \leq r'(u) + (2r'(u)-r(u)-2)-2r(u)\\
            \tab \leq 3(r'(u)-r(u))-2.
            \) 
            
            \item \textbf{zig-zag:} Like in zig-zig, only the ranks of $u$, $v$ and $z$ change in a zig-zag. Also, $r'(u)=r(w)$ and $r(u)\leq r(v)$. Thus,\\
            \(\delta_i = r'(w) + r'(v) + r'(u) - r(w) - r(v) - r(u) \\
            \tab \leq r'(v) + r'(w) - r(u) - r(v)\\
            \tab \leq r'(v) + r'(w) - 2r(u).
            \) \\
            Since $s'(v) + s'(w) \leq s'(u)$, by inequality \ref{ineq:log}, $r'(v)+r'(w)\leq 2r'(u)-2$, and. Thus,\\
            \(\delta_i \leq 2r'(u) -2 -2r(u)\\
            \tab \leq 3(r'(u)-r(u))-2.
            \)  
            
            \item \textbf{zig:} Since a zig only involves two nodes, $u$ and its parent $v$, only the ranks of $u$ and $v$ change. In addition, $r'(v)\leq r(v)$ and $r'(u)\geq r(u)$. Thus,\\
            \(\delta_i = r'(v) + r'(u) - r(v) - r(u) \\
            \tab \leq r'(u) - r(u)\\
            \tab \leq 3(r'(u) - r(u)).
            \)          
        \end{itemize}
           
    \end{proof}

    Now, we can bound the total variation of $r(\mathcal{T})$ caused by $m$ concurrent splay requests in super-round $\mathcal{R}$.

    \begin{lemma}\label{lemma:totalvariation}
    Given a SplayNet $\mathcal{T}$ on $n$ nodes, and a set of concurrent splay requests $\mathcal{F}$, $|\mathcal{F}|=m$ in super-round $\mathcal{R}$. Let $\Delta$ be the total rank variation (potential change) in $\mathcal{T}$ caused by all rotations in $\mathcal{F}$. We have that 
    \begin{equation}
    \Delta \leq -2\sum\limits_{i=1}^D{|\mathcal{S}_i|} + O(m\log{n}),\nonumber
    \end{equation}
    \noindent where $\mathcal{D}$ is the maximum splay distance (Definition \ref{def:maxSplayDistance}), and $\mathcal{S}_i$ is the set of nodes originating a rotation request in round $t_i \in \mathcal{R}$.
    \end{lemma}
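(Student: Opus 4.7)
The plan is to apply Lemma~\ref{lemma:variationperround} on a per-rotation basis and sum the resulting bound over all rotations performed during super-round~$\mathcal{R}$. Writing $\Delta=\sum_r \delta_r$, where the sum ranges over every rotation in $\mathcal{R}$, and splitting the sum into zig rotations and zig-zig/zig-zag rotations, every rotation contributes a ``telescoping'' term $3(r'(u_r)-r(u_r))$, and, in addition, every zig-zig or zig-zag rotation contributes the $-2$ penalty. So
\begin{equation}
\Delta \;\leq\; 3\sum_r \bigl(r'(u_r)-r(u_r)\bigr) \;-\; 2\,\#\{r:\text{zig-zig or zig-zag}\}.\nonumber
\end{equation}

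Next, I would count the zig rotations. By Definition~\ref{def:costSplay}, a splay request contains at most two zig rotations (one for the source and one for the destination, and only when the corresponding distances to the lowest common ancestor are odd). Therefore the total number of zigs is at most $2m$, and the number of zig-zig/zig-zag rotations is at least $\sum_{i=1}^{\mathcal{D}}|\mathcal{S}_i| - 2m$. The penalty term thus contributes at most $-2\sum_{i=1}^{\mathcal{D}}|\mathcal{S}_i| + 4m$.

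It then remains to bound the telescoping sum $3\sum_r(r'(u_r)-r(u_r))$ by $O(m\log n)$. The key observation is that the only nodes ever acting as $u_r$ are sources and destinations of splay requests, so it suffices to group rotations by splay. Fix a splay $\mathcal{S}_j(s_j,d_j)$ and consider the subsequence of rotations in which $s_j$ is the level-one node; by construction of the rotations (Definition~\ref{def:rotation}), the subtree rooted at $s_j$ strictly enlarges at each such rotation, so $r'(s_j)\geq r(s_j)$, and the rank is bounded by $\log n$. A telescoping argument, coupled with $r(s_j),r'(s_j)\in[0,\log n]$, gives $O(\log n)$ per splay side, hence $O(m\log n)$ in total. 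Combining everything yields
\begin{equation}
\Delta \;\leq\; -2\sum_{i=1}^{\mathcal{D}}|\mathcal{S}_i| + 4m + O(m\log n) \;=\; -2\sum_{i=1}^{\mathcal{D}}|\mathcal{S}_i| + O(m\log n).\nonumber
\end{equation}

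The hard part is the telescoping step in the concurrent setting: between two consecutive rotations of $s_j$, other splays may rotate and perturb $r(s_j)$, so the telescoping is not a clean collapse as in the sequential splay tree analysis. I would handle this by showing that (i) absence of loops (Lemma~\ref{claim:noLoops}) ensures $s_j$ genuinely ascends and cannot oscillate, and (ii) any downward rank drift of $s_j$ induced by a foreign rotation is matched by a corresponding rank gain in the foreign splay's own telescoping term, so the contributions can be regrouped globally; alternatively, one can bound the drift by noting that each foreign rotation touching $s_j$ has $s_j$ as one of its $O(1)$ participating nodes, and charge the resulting $O(\log n)$ drift to the foreign splay's budget, which is already accounted for in its own $O(\log n)$ allowance.
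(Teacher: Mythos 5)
Your proposal is correct and follows essentially the same route as the paper: apply Lemma~\ref{lemma:variationperround} to each rotation, collect the $-2$ penalties while correcting for the at most two zig rotations per splay (the paper adds $+4$ per splay where you subtract $2m$ fewer penalties, which is the same accounting), and telescope the $3(r'(u)-r(u))$ terms separately over each source's ascent to the common ancestor and each destination's subsequent ascent, bounding each telescoped difference by $3\log n$ to obtain $O(m\log n)$. The concurrency issue you flag in your final paragraph is genuine, but the paper does not resolve it either --- it simply telescopes the round-indexed ranks $r_i(u_k)$, which makes the collapse definitionally exact while leaving implicit exactly the point you raise, namely that foreign rotations in intervening rounds may perturb $r(u_k)$ between consecutive rotations of the same splay.
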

    \begin{proof}
    Let $\mathcal{S}_{\mathcal{F}}$ denote the set of source nodes in $\mathcal{F}$ and $\mathcal{D}_{\mathcal{F}}$ the set of destination nodes. Moreover, for each splay request $\mathcal{S}(s_k,d_k) \in \mathcal{F}$, define two time-slots $\tau'_k$ and $\tau''_k$, such that $s_k = \alpha_{\tau'_k}(s_k,d_k)$ and $s_k=d_k.p(t''_k)$.
    
    Using Lemma \ref{lemma:variationperround} and summing over all rounds in $\mathcal{R}$, we have that:
    \begin{eqnarray}
    \Delta &\leq & \sum\limits_{i=1}^{\mathcal{D}} \sum\limits_{k=1}^{|\mathcal{S}_i|}{\delta_i(u_k)}
           \leq  \sum\limits_{i=1}^{\mathcal{D}} \sum\limits_{k=1}^{|\mathcal{S}_i|}{(3(r_{i}(u_k)-r_{i-1}(u_k))-2)+4}\nonumber\\
           &\leq & 4-2\sum\limits_{i=1}^{\mathcal{D}}{|\mathcal{S}_i|}+\sum\limits_{u_k\in\mathcal{S}_{\mathcal{F}}}\sum\limits_{i=1}^{t'_k}{3(r_{i}(u_k)-r_{i-1}(u_k))} + \sum\limits_{u_k\in\mathcal{D}_{\mathcal{F}}}\sum\limits_{i=t'_k}^{t''_k}{3(r_{i}(u_k)-r_{i-1}(u_k))}\nonumber\\
           &\leq & 4-2\sum\limits_{i=1}^{\mathcal{D}}{|\mathcal{S}_i|}+3\sum\limits_{u_k\in\mathcal{S}_{\mathcal{F}}}{(r_{t'_{k}}(u_k)-r_{0}(u_k))} + 3\sum\limits_{u_k\in\mathcal{D}_{\mathcal{F}}}{(r_{t''_{k}}(u_k)-r_{t'_{k}}(u_k))}\nonumber\\
           &\leq & 4-2\sum\limits_{i=1}^{\mathcal{D}}{|\mathcal{S}_i|}+O(m\log{n}), \nonumber
    \end{eqnarray}
    \noindent where $+4$ accounts for the two possible zig rotations in each (double) splay. 
    \end{proof}
           
    
    \begin{theorem}\label{thm:totalcost}
    Given a SplayNet $\mathcal{T}$ on $n$ nodes, and a set of concurrent splay requests $\mathcal{F}$, $|\mathcal{F}|=m$ in super-round $\mathcal{R}$. The amortized average cost, per splay request, is $\text{Cost($\mathcal{T},\mathcal{R},\mathcal{F}$)} = O(\log{n})$ rounds, which takes $O(\log{n}\log{m})$ time-slots in total.
    \end{theorem}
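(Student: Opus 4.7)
The plan is to apply Tarjan's potential method with the total SplayNet rank $\phi(\mathcal{T}_i) = \sum_{u \in \mathcal{T}_i} r_i(u)$ as the potential function, combining the telescoping rank bound of Lemma~\ref{lemma:totalvariation} with per-round cost accounting to obtain an $O(\log n)$ amortized cost per splay measured in rounds; multiplying by the per-round time-slot bound of Lemma~\ref{claim:round} then yields the overall bound.

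First, I would bound the total actual cost (in cyber-dollars) summed over all $m$ splay requests. Under the global round-synchronization assumption stated at the beginning of Section~\ref{subsec:amortized}, every node in $\mathcal{S}_i$ performs at most one rotation in round $t_i$, each costing at most $2$ cyber-dollars, so by Definition~\ref{def:costSplay} the total actual cost satisfies $\sum_{j=1}^m p_j \leq 2\sum_{i=1}^{\mathcal{D}} |\mathcal{S}_i|$.

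Second, I would invoke the standard telescoping identity $\sum_{j=1}^m \hat{c}_j = \sum_{j=1}^m p_j + \Delta$, where $\Delta = \phi(\mathcal{T}_{\mathcal{D}}) - \phi(\mathcal{T}_0)$ is the total rank variation over super-round $\mathcal{R}$. Substituting the bound $\Delta \leq -2\sum_{i=1}^{\mathcal{D}} |\mathcal{S}_i| + O(m\log n)$ from Lemma~\ref{lemma:totalvariation}, the dominant $|\mathcal{S}_i|$ terms cancel against the actual cost, leaving $\sum_{j=1}^m \hat{c}_j \leq O(m\log n)$. Dividing by $m$ gives an amortized average cost of $O(\log n)$ cyber-dollars, i.e.\ $O(\log n)$ rounds, per splay request.

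Finally, by Lemma~\ref{claim:round} each round completes within $O(\log m)$ time-slots, so the amortized per-splay cost in time-slots is $O(\log n \log m)$, which establishes the theorem. The main subtlety I expect is the double-counting step that identifies $\sum_j p_j$ (rotations grouped by splay) with $2\sum_i |\mathcal{S}_i|$ (rotations grouped by round): it relies on the synchronization hypothesis that every still-active splay contributes at most one rotation per round until reaching the splay objective of Definition~\ref{def:OF}, and on absorbing the initial-potential term $-\phi(\mathcal{T}_0)$ into the $O(m\log n)$ slack that Lemma~\ref{lemma:totalvariation} obtains by telescoping rank differences across the $2m$ source and destination endpoints.
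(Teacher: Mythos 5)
Your proposal is correct and follows essentially the same route as the paper: it bounds the total actual cost by $2\sum_{i=1}^{\mathcal{D}}|\mathcal{S}_i|$, cancels this against the $-2\sum_{i=1}^{\mathcal{D}}|\mathcal{S}_i|$ term in the potential bound of Lemma~\ref{lemma:totalvariation}, divides the remaining $O(m\log n)$ by $m$, and multiplies by the $O(\log m)$ round length from Lemma~\ref{claim:round}. The subtleties you flag (the round-synchronization hypothesis behind the double-counting and absorbing the initial potential into the slack) are handled implicitly rather than explicitly in the paper's own, terser, proof.
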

    \begin{proof}
    By Lemma \ref{lemma:totalvariation}, the payment of $\Delta$ cyber-dollars is sufficient both to maintain the invariant and pay for the cost of the $m$ concurrent splay requests in a splay tree on $n$ nodes. Using Definition (\ref{def:costSplay}) of the cost of a splay as the total number of rotations performed to execute it, let $c_m$ denote the total actual cost of $m$ splays in number of rotations. The total cost to perform $m$ concurrent splay requests is: 
    \begin{eqnarray}
    \text{Cost($\mathcal{T},\mathcal{R},\mathcal{F}$)} & \leq & c_m + \Delta
     \leq  2\sum\limits_{i=1}^{\mathcal{D}}{|\mathcal{S}_i|}-2\sum\limits_{i=1}^{\mathcal{D}}{|\mathcal{S}_i|}+O(m\log{n})+4 
    =O(m\log{n}) \text{ rounds},\nonumber
    \end{eqnarray}
        
    \noindent which gives an amortized average cost, per splay request, in number of rounds, of $O(\log{n})$. Multiplying it by the upper bound on the length of a round, given in Lemma~\ref{claim:round}, the total duration, in time-slots, of the super-round $\mathcal{R}$ is $O(\log{n}\log{m})$.
\end{proof}

     \begin{theorem}\label{thm:finalEntropies}
    Given a SplayNet $\mathcal{T}$ on $n$ nodes, and a set of concurrent splay requests $\mathcal{F}$, $|\mathcal{F}|=m$ in super-round $\mathcal{R}$. Let $H(\hat{X})$ and $H(\hat{Y})$ denote the empirical entropies (Definition \ref{def:entropy}) of the source and destination nodes in $\mathcal{F}$, respectively. The average amortized cost to serve $\mathcal{F}$ is $O((H(\hat{X})+H(\hat{Y}))\log{m})$.
    \end{theorem}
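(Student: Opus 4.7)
The plan is to refine the potential-function analysis of Theorem~\ref{thm:totalcost} by replacing the uniform subtree-size rank with a \emph{weighted} rank driven by the empirical communication frequencies, following the classical Sleator--Tarjan access-lemma recipe as adapted to SplayNets in \cite{Avin13} (Theorem~\ref{thm:entropies}). Concretely, to each node $u$ I would assign a weight $w(u)$ proportional to the number of times $u$ appears as a source or destination in $\mathcal{F}$ (with a small floor to keep $w(u)>0$), set $W=\sum_u w(u)=\Theta(m)$, and redefine the rank as $r(u)=\log_2\bigl(\sum_{v\in\text{subtree}(u)} w(v)\bigr)$, with potential $\phi(\mathcal{T})=\sum_u r(u)$.

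Next, I would re-run the per-rotation calculation of Lemma~\ref{lemma:variationperround} verbatim, only with these weighted ranks: the same applications of the log-sum inequality $\log a+\log b\le 2\log(a+b)-2$ yield, for each rotation that lifts $u$, a potential change $\delta\le 3(r'(u)-r(u))-2$ (zig-zig/zig-zag) and $\delta\le 3(r'(u)-r(u))$ (zig). Summing telescopically along the rotations performed by the source $s$ of a splay $\mathcal{S}(s,d)\in\mathcal{F}$ (and analogously for $d$) gives an amortized cost per splay of
\begin{equation}
O\!\left(\log\frac{W}{w(s)}+\log\frac{W}{w(d)}\right)\text{ rounds}.\nonumber
\end{equation}
Summing over all $m$ splay requests and using $w(s)=m\,f(x_s)$, $w(d)=m\,f(y_d)$ together with the definition of empirical entropy (Definition~\ref{def:entropy}) collapses the sums to $O\bigl(m(H(\hat X)+H(\hat Y))\bigr)$ rounds in total; this is exactly the entropy bound of Theorem~\ref{thm:entropies}, now justified for the concurrent execution because Lemma~\ref{lemma:variationperround} was already established in that setting. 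Dividing by $m$ gives an average amortized cost of $O(H(\hat X)+H(\hat Y))$ rounds per splay request, and multiplying by the upper bound $O(\log m)$ on the length of a round from Lemma~\ref{claim:round} yields the claimed $O\bigl((H(\hat X)+H(\hat Y))\log m\bigr)$ time-slots per splay.

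The main obstacle I anticipate is carefully aligning the weighted per-rotation accounting with the concurrent, double-splay structure: in the concurrent model a single splay $\mathcal{S}(s,d)$ interleaves rotations at $s$ and at $d$ with unrelated rotations from other splays, and a node's rank is altered not only by its own rotations but also by those of other splays that carry it as a child. I would handle this by charging the potential change of each rotation to the splay that originated it (as in Lemma~\ref{lemma:totalvariation}), and by exploiting the loop-freedom of Lemma~\ref{claim:noLoops} to guarantee that $s$ and $d$ each traverse a monotone, finite sequence of ancestors toward $\alpha(s,d)$, so that the telescoping of weighted ranks along that path is valid and the boundary terms $r_0$ and $r_{\text{final}}$ are each $O(\log W)=O(\log m)$ and contribute only an additive $O(m\log m)$ absorbed into the entropy terms.
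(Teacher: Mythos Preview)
Your proposal is correct and follows essentially the same approach as the paper: replace the uniform rank by a frequency-weighted rank, rerun the access-lemma inequalities of Lemma~\ref{lemma:variationperround} unchanged, telescope along the source and destination phases of each splay as in Lemma~\ref{lemma:totalvariation}, collapse the sums to the empirical entropies, and finally multiply by the $O(\log m)$ round length from Lemma~\ref{claim:round}. The only cosmetic difference is that the paper keeps two separate weight functions, $sw(u)=sf(u)/m$ for the source phase and $dw(u)=df(u)/m$ for the destination phase, whereas you describe a single combined weight; either choice yields the same $O(H(\hat X)+H(\hat Y))$ bound, since $sf(u)+df(u)\ge sf(u)$ only tightens the telescoped term.
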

    \begin{proof}   
    Let $\mathcal{S}_{\mathcal{F}}$ denote the set of source nodes in $\mathcal{F}$ and $\mathcal{D}_{\mathcal{F}}$ the set of destination nodes. Moreover, for each splay request $\mathcal{S}(s_k,d_k) \in \mathcal{F}$, define two time-slots $\tau'_k$ and $\tau''_k$, such that $s_k = \alpha_{\tau'_k}(s_k,d_k)$ and $s_k=d_k.p(t''_k)$.
    
    Let $sf(u_k)$ and $df(u_k)$ be the number of times $u_k$ appears as source and destination in $\mathcal{F}$, respectively. Let's define two weights of $u_k$: $sw(u_k)= \frac{sf(u_k)}{m}$ and $dw(u_k)= \frac{df(u_k)}{m}$. Moreover, let's define two types of size: $ss(u_k)=\sum\limits_{u_k\in T_k}{sw(u_k)}$, where $T_k$ is the sub-tree rooted at $u_k$ (and analogously, $ds(u_k)$), and ranks $sr(u_k)=\log_2{ss(u_k)}$ (and $dr(u_k)$). Using the fact that $ss(u_k)\geq sf(u_k)/m, u_k \in \mathcal{S}_{\mathcal{F}}$ and $s(u_k)\leq 1, u_k \notin \mathcal{S}_{\mathcal{F}}$ (same for $ds(u_k)$), we have that:
    
    \begin{eqnarray}
    \text{Cost($\mathcal{T},\mathcal{R},\mathcal{F}$)} &\leq & 4+3\sum\limits_{u_k\in\mathcal{S}_{\mathcal{F}}}{(r_{t'_{k}}(u_k)-r_{0}(u_k))}
            + 3\sum\limits_{u_k\in\mathcal{D}_{\mathcal{F}}}{(r_{t''_{k}}(u_k)-r_{t'_{k}}(u_k))}\nonumber\\
           &\leq & 4+3\sum\limits_{u_k\in\mathcal{S}_{\mathcal{F}}}{sf(u_k)\log_2{\frac{m}{sf(u_k)}}}
           + 3\sum\limits_{u_k\in\mathcal{D}_{\mathcal{F}}}{df(u_k)\log_2{\frac{m}{df(u_k)}}}\nonumber\\
           &=& O((H(\hat{X})+H(\hat{Y})) \text{ rounds}.\nonumber
    \end{eqnarray}
    
    \noindent Multiplying by the length of a round (Lemma \ref{claim:round}), we have:
    \begin{eqnarray}
        \text{Cost($\mathcal{T},\mathcal{R},\mathcal{F}$)} &= &O((H(\hat{X})+H(\hat{Y})\log{m}) \text{ time-slots},\nonumber
    \end{eqnarray}
    \noindent which is a $O(\log{m})$ factor larger than the bound in Theorem~\ref{thm:entropies}, where no concurrency among the splay requests was considered.
    
\end{proof}

\chapter{Next Steps}\label{chap:future}
In this chapter we present the future directions that we intend to follow in this thesis project. They are divided into four categories: algorithm (section \ref{sec:future/algorithm}), analysis (section \ref{sec:future/analysis}), dataset (section \ref{sec:future/dataset}) and experimentation (section \ref{sec:future/experimentation}).

\section{Algorithm}\label{sec:future/algorithm}
The algorithm proposed in this project implements an abrupt trigger for splays in SplayNet. This means that once a source node performs a communication request to a destination node, that source node triggers a splay operation. Once a source node requests a destination node, rotations are performed until the splay is complete. Thus, a communication request and a splay request are equivalent in our model. Is easy to see that such trigger policy is efficient if the communication pattern presents high locality, i. e., a pair of nodes that have communicated in the past are likely to communicate again in the future. 

As a future direction, we intend to propose and evaluate different policies for triggering splay operations, in order to optimize the access time between nodes, given a communication request. One way to do this is by \textit{Lazy Splaying}. Rather than an access request to trigger rotations all the way between a source and a destination, only a few rotations can be performed per request. Thus, if the frequency of communication requests between the same source and destination is high, these nodes are more likely to become neighbors. 

Other Splay policies, considering different parameters, such as concurrency level or the potential function will also be proposed, implemented and analyzed accordingly, based on \cite{Afek2012}, \cite{Afek2014}, \cite{Sikder:skipGraphs} and \cite{Reiter08}. The purpose of new splay trigger policies is to be able to control the number of rotations per splay or the number of communication requests required to trigger a splay request. In this way, pairs of nodes that communicate more frequently are most likely to become closer and stay that way, than pairs of nodes that rarely (or never) communicate, optimizing network concurrency. In addition, we intend to reduce the cost of a splay, by reducing the cost of concurrency in the network, when compared to the abrupt splay trigger.

\section{Analysis}\label{sec:future/analysis}
The proposed concurrent SplayNets is based on conservative locking, in which nodes participating in a rotation lock themselves during the rotation. Since rotations are sorted according to priority criteria, which includes the hierarchy of the source node in the network, some nodes that have a link changed by the rotation do not need be locked. In this way, we have been able to provide a greater degree of concurrency with efficient locking routines. As a future direction, we intend to analyze in detail the impact caused by the locking, as well as the inaccuracies caused by its absence. In the current analysis, the cost of concurrency is $O(\log m)$, where $m$ is the number of concurrent splays. In addition, we want to compare the effect of introducing concurrency among rotations on the network, with the mutual exclusion approach, which allows only one splay at a time. 

The analysis presented in this project is divided into two main directions. First, we analyze the main issues that might occur in SplayNets upon introducing concurrency. Afterward, we present an analysis of the cost, in terms of time, of the rotations and splays. For future directions we intend to expand our analysis, considering the different splay trigger policies cited in section \ref{sec:future/algorithm}. We envision to compare how each of these policies may affect SplayNet performance, given the network and the communication pattern. We also want to consider different analysis techniques, in order to achieve new analytical results. One possible direction is to analyze the dynamic optimality conjecture, proposed in \cite{Sleator:1985:SBS:3828.3835}, which claims that splay trees perform as well as any other binary search tree algorithm up to a constant factor. This is an unproven conjecture, that may be adapted to SplayNets.

\begin{defn}
\textbf{Dynamic Optimality Conjecture} (\cite{Sleator:1985:SBS:3828.3835}): Let $\mathcal{A}$ be any algorithm that carries out each access by traversing the path from the root to the node (let's consider node $x$) containing the accessed item, at a cost of $d(x) + 1$, where $d(x)$ is the depth of $x$, and that between accesses performs an arbitrary number of rotations anywhere in the tree, at a cost of one per rotation. Then, the total time to perform all the accesses by splaying is no more than $O(n)$ plus a constant times the time required by algorithm $\mathcal{A}$. 
\end{defn}

\section{Dataset}\label{sec:future/dataset}

The analysis presented in \cite{Roy:2015} shows that, unlike reported on in literature the majority of traffic in a data center is not rack-local. Instead, almost 60\% of all traffic from Facebook's data center clusters are inter-rack (intra-cluster). Thus, combining SplayNet with ProjecToR can optimize the communication performance between racks inside a data center. Part of the data collected and analyzed in \cite{Roy:2015} is available, and we intend to use this data to perform a more realistic simulation of SplayNet. The data consists of samples from 3 production clusters. Cluster-A is for Database, Cluster-B is for Web servers, and Cluster-C is used as Hadoop servers. All three clusters are in Facebook's Altoona Data Center.

Another potentially interesting traffic trace to consider is the one presented by \cite{Garcia2014100}. The dataset, called CTU-13, is composed of botnet traffic that was captured in the CTU University, Czech Republic, in 2011. This dataset is available for download as a pcap file for the botnet capture only and a labeled NetFlow file \footnote{CTU-13 dataset available in:  http://mcfp.weebly.com/the-ctu-13-dataset-a-labeled-dataset-with-botnet-normal-and-background-traffic.html}.

\section{Experimentation}\label{sec:future/experimentation}
The first step regarding experimentation is under development. We are implementing the proposed algorithm in Sinalgo (\cite{Sinalgo}). The main purpose of this implementation is to verify and validate the proposed algorithm. Sinalgo is a network simulator that focuses on the verification of network algorithms, and abstracts from the underlying layers. It offers a message passing view of the network, which captures well the view of actual network devices, but it does not simulate the different layers of the network stack. Thus, we do not take into account the protocol stack design. This is probably the biggest decision we must make in the future to implement the protocol prototype and compare it to some realistic baselines. At first, we intend to implement a concurrent version of SplayNet in Omnet++ (\cite{omnet}) to compare simulation results with some data center baseline. Omnet++ is an open-source discrete event simulator, designed to support modeling very large networks from reusable model components. Omnet ++ has several features that allow the implementation of SplayNets in a data center scenario.

 The experimental results can be used to compare the cost and benefit of SplayNet in a real scenario, and can provide us additional information to improve the model and analysis. In order to accomplish such task, we can measure the communication latency, the number of splay operations, the message exchange among other parameters. We also intend to use the results to ratify and complement the theoretical analysis.





\end{document}